\newcommand{\tu}{\textup}
\newcommand{\ts}{\textstyle}
\newcommand{\mr}{\mathrm}
\newcommand{\mc}{\mathcal}
\newcommand{\mb}{\mathbf}
\newcommand{\uhr}{\upharpoonright}
\newcommand{\tsfrac}[2]{{\ts\frac{#1}{#2}}}
\newcommand{\BAR}[1]{\overline{#1}}
\let\Ex\undefined
\let\Pr\undefined
\DeclareMathOperator*{\Ex}{\mathds{E}}
\DeclareMathOperator*{\Pr}{\mathds{P}}
\newcommand{\Bin}{\mr{Bin}}
\newcommand{\defeq}{\vcentcolon=}
\newcommand{\fieldfont}[1]{\mathbb{#1}}
\newcommand{\N}{\fieldfont{N}}
\newtheoremstyle{theorem-style}
  {}
  {}
  {\slshape}
  {}
  {\bf}
  {.}
  {.5em}
  {}
\newtheorem{thm}{Theorem}
\newtheorem{prop}[thm]{Proposition}
\newtheorem{la}[thm]{Lemma}
\newtheorem{main-la}[thm]{Main Lemma}
\newtheorem{cor}[thm]{Corollary}
\theoremstyle{definition}
\newtheorem{df}[thm]{Definition}
\def\[#1\]{\begin{align*}#1\end{align*}}
\renewcommand{\ul}{\underline}
\theoremstyle{definition}
\newtheorem{op}[thm]{Open Problem}
\newcommand{\DT}{\mathsf{DT}}
\newcommand{\DL}{\mathsf{DL}}
\newcommand{\ODL}{\mathsf{ODL}}
\newcommand{\wODL}{\mathsf{wODL}}
\newcommand{\DTd}{\mathsf{DT_{\mr{depth}}}}
\newcommand{\DLw}{\mathsf{DL_{\mr{width}}}}
\newcommand{\DNF}{\mathsf{DNF}}
\newcommand{\CNF}{\mathsf{CNF}}
\newcommand{\DNFw}{\mathsf{DNF}_{\mr{width}}}
\newcommand{\CNFw}{\mathsf{CNF}_{\mr{width}}}
\newcommand{\U}[2]{U(#2)}
\newcommand{\width}{\mb{width}}
\renewcommand{\rho}{\varrho}
\renewcommand{\mb}{\bm}
\newcommand{\ACzero}{\mr{AC}^0}
\newcommand{\AND}{\textsc{and}}
\newcommand{\OR}{\textsc{or}}
\begin{document}

\title{Shrinkage of Decision Lists and DNF Formulas}
\author{Benjamin Rossman\thanks{Duke University. Email \texttt{benjamin.rossman@duke.edu}}}
\date{\today}
\maketitle{}

\begin{abstract}
We establish nearly tight bounds on the expected shrinkage of decision lists and DNF formulas under the $p$-random restriction $\mb R_p$ for all values of $p \in [0,1]$. For a function $f$ with domain $\{0,1\}^n$, let $\DL(f)$ denote the minimum size of a decision list that computes $f$. We show that
\[
  \mathds E[\ \DL(f{\uhr}\mb R_p) \ ] \le 
  \DL(f)^{\log_{2/(1-p)}(\frac{1+p}{1-p})}.
\]
For example, this bound is $\sqrt{\DL(f)}$ when $p = \sqrt{5}-2 \approx 0.24$. For Boolean functions $f$, we obtain the same shrinkage bound with respect to DNF formula size plus $1$ (i.e., replacing $\DL(\cdot)$ with $\DNF(\cdot)+1$ on both sides of the inequality).
\end{abstract}

\section{Introduction}

Random restrictions are a powerful tool in circuit complexity and the analysis of Boolean functions.  
A {\em restriction} is a partial assignment to the input bits of a function $f$ on the hypercube $\{0,1\}^n$.
For a parameter $p \in [0,1]$, the {\em $p$-random restriction} $\mb R_p$ independently leaves each input bit free with probability $p$ and otherwise assigns it to $0$ or $1$ with equal probability. 
We denote by $f{\uhr}\mb R_p$ the function obtained from $f$ by restricting its inputs to the subcube of $\{0,1\}^n$ that correspond to $\mb R_p$.

Random restrictions are known to reduce the complexity of functions in simple models of computations, such as decision trees ($\DT$), decision lists ($\DL$), DNF formulas ($\DNF$), and DeMorgan formulas ($\mc L$); the symbols in parentheses are notation for the corresponding size measures (see Section \ref{sec:prelims} for definitions).
With respect to DeMorgan formula leaf-size $\mc L$, 
it is easy to see that $\mc L(f{\uhr}\mb R_p)$ has expectation at most $p \cdot \mc L(f)$. (This follows by linearity of expectation from the observation that each input literal in a minimal formula for $f$ is eliminated by $\mb R_p$ with probability $p$.)
Subbotovskaya \cite{subbotovskaya1961realizations} was the first to show that the expected shrinkage factor is in fact significantly smaller than $p$ (she showed an upper bound $O(p^{3/2})$ for $p \ge 1/\mc L(f)^{2/3}$).
A subsequent line of results \cite{andreev1987method,impagliazzo1993effect,paterson1993shrinkage,hastad1998shrinkage,tal2014shrinkage}, culminating in an $p^{2-o(1)}$ bound of H{\aa}stad \cite{hastad1998shrinkage} and a low-order improvement by Tal \cite{tal2014shrinkage}, eventually established an asymptotically tight bound:

\begin{thm}[Shrinkage of DeMorgan formulas \cite{tal2014shrinkage}]\label{thm:DeMorgan}
For all Boolean functions $f$,
\[
  \Ex[\ \mc L(f{\uhr}\mb R_p)\ ] = O(\,p^2 \mc L(f) + p \sqrt{\mc L(f)}\,).
\]
\end{thm}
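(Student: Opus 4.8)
The plan is to prove the bound by structural induction on a size-minimal De Morgan formula $F$ computing $f$, tracking the expected size of the restricted formula $F{\uhr}\mb R_p$, which upper-bounds $\mc L(f{\uhr}\mb R_p)$. Writing $s = \mc L(F)$, the natural inductive statement to aim for is $\Ex[\mc L(F{\uhr}\mb R_p)] \le C(p^2 s + p\sqrt{s})$, matching the two regimes of the theorem: the $p^2 s$ term dominates for large formulas and reflects the shrinkage exponent $2$, while the additive $p\sqrt{s}$ term absorbs the low-order behavior that Tal's refinement is designed to control. The base case is a single literal, which survives with probability $p$ and contributes $\Ex[\mc L] = p$, consistent with the bound.

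For the inductive step at a top gate $F = F_1 \circ F_2$ with $\circ \in \{\wedge,\vee\}$ and $\mc L(F_i) = s_i$, the obvious estimate $\mc L(F{\uhr}\rho) \le \mc L(F_1{\uhr}\rho) + \mc L(F_2{\uhr}\rho)$ combined with the inductive hypothesis yields $\Ex[\mc L(F{\uhr}\mb R_p)] \le C p^2 s + Cp(\sqrt{s_1}+\sqrt{s_2})$. Since $\sqrt{s_1}+\sqrt{s_2} \ge \sqrt{s_1+s_2} = \sqrt{s}$, this is too weak by the gap $\sqrt{s_1}+\sqrt{s_2}-\sqrt{s}$; the entire point of the argument is that this deficit must be financed by the \emph{collapse} events, in which one subformula restricts to a constant and the gate disappears (an $\wedge$-gate vanishes when a child becomes $0$, a $\vee$-gate when a child becomes $1$). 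Only these collapses generate shrinkage beyond the trivial factor $p$, and it is precisely their accounting that separates the sharp exponent $2$ from Subbotovskaya's $3/2$.

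The crux, and the step I expect to be the main obstacle, is therefore to lower-bound the expected savings from collapse, equivalently to show that $\Pr[F_i{\uhr}\mb R_p \equiv \text{const}]$ is large enough to cover the subadditivity gap. I would approach this by analyzing the restriction as a two-stage process—first choose the free set, then the random assignment to the fixed coordinates—and quantify, for each subformula $g$, the probability that the resulting restriction already forces $g$ constant. This is naturally phrased on the Fourier side: $g{\uhr}\rho$ is nonconstant exactly when $W^{\ge 1}(g{\uhr}\rho) > 0$, so by Markov's inequality the collapse probability is governed by $\Ex_{\mb R_p}[W^{\ge 1}(g{\uhr}\mb R_p)]$, which I would compute from the action of a $p$-random restriction on Fourier weights, $\Ex_{\mb R_p}[W^k(g{\uhr}\mb R_p)] = \sum_{m\ge k}\binom{m}{k}p^k(1-p)^{m-k}W^m(g)$, together with the spectral input $\sum_k k\,W^k(g) = \Inf(g) = O(\sqrt{\mc L(g)})$ relating total influence to formula size. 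Balancing the collapse savings against the subadditivity gap over all gates, and optimizing constants, is the delicate quantitative step carried out by H{\aa}stad to pin down the exponent and sharpened by Tal to the clean additive form $p^2\mc L(f)+p\sqrt{\mc L(f)}$.
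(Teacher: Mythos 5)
You should first note a mismatch of expectations: the paper never proves Theorem \ref{thm:DeMorgan}. It is imported verbatim from prior work (the Subbotovskaya--H{\aa}stad--Tal line, \cite{hastad1998shrinkage,tal2014shrinkage}) and is used in this paper only as a black box, in the proof of Corollary \ref{cor:Linfty}. So there is no internal proof to compare yours against; the only question is whether your outline stands as a proof on its own. It does not. Your final sentence delegates the decisive step --- ``balancing the collapse savings against the subadditivity gap \dots\ is the delicate quantitative step carried out by H{\aa}stad \dots\ and sharpened by Tal'' --- to exactly the sources the theorem cites. That step is not a routine optimization of constants; it is the entire content of the result, and an argument whose crux is an appeal to the theorem's own citation is a restatement of the citation, not a proof.

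Beyond the deferral, two steps in the sketched route would fail as written. First, the Markov step runs the wrong way: smallness of $\Ex[\,W^{\ge 1}(g{\uhr}\mb R_p)\,]$ does not make $\Pr[\,g{\uhr}\mb R_p \text{ is nonconstant}\,]$ small, because a nonconstant Boolean function on $k$ surviving variables can have $W^{\ge 1}$ as small as $\Theta(2^{-k})$; converting the expectation bound into a collapse probability therefore loses a factor exponential in the number of free variables, which is precisely the regime (many surviving variables) where the theorem has content. Second, even granting sharp collapse probabilities, gate-local bookkeeping --- a sibling collapsing to a constant pays for the deficit $\sqrt{s_1}+\sqrt{s_2}-\sqrt{s}$ --- is exactly the Subbotovskaya/Impagliazzo--Nisan/Paterson--Zwick framework, and that framework is known to top out at exponents strictly below $2$ (it gave $3/2$, then $\approx 1.55$, then $\approx 1.63$): a single gate-level analysis can only credit local savings, whereas exponent $2$ requires accounting for cascading simplifications across the whole formula. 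H{\aa}stad achieves this with an intricate global case analysis of the restriction process, and Tal's proof (the form quoted here) abandons gate-by-gate induction altogether, deriving shrinkage from Fourier concentration of small formulas (via approximate-degree/quantum bounds of the form $O(\sqrt{\mc L(f)})$) combined with iterated restrictions. Your sketch correctly locates the difficulty, and your Fourier identities (the level-$k$ weight formula under $\mb R_p$ and the influence bound $\mr{Inf}(g) \le \sqrt{\mc L(g)}$) are true, but the outline contains no idea that overcomes the obstacle it identifies.
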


The constant $2$ 
in the exponent $p$ 
in Theorem \ref{thm:DeMorgan} 
is known as the ``shrinkage exponent'' of DeMorgan formulas.
Shrinkage under $\mb R_p$ has also been studied for restricted types of formulas, namely read-once, monotone, and bounded-depth ($\ACzero$).
It was shown in \cite{dubiner1993read,haastad1995shrinkage} that read-once formulas
have shrinkage exponent
$\log_{\sqrt 5-1}(2) \approx 3.27$. 
The shrinkage exponent of {monotone formulas} is between $2$ and $\log_{\sqrt 5-1}(2)$ and conjectured to equal the latter; determining the exact constant is a longstanding question
(Open Problem \ref{op:m}).
In the $\ACzero$ setting (bounded-depth formulas with unbounded \AND{} and \OR{} gates), it is known that \mbox{depth-$d$} formulas with fan-in $m$ shrink to expected size $O(1)$ under $\mb R_p$ when $p$ is $O(1/\log m)^{d-1}$ \cite{rossman:LIPIcs:2019:10823}. However, it is open to determine the shrinkage rate for larger $p$, particularly in the ``mild random restriction'' regime where $p$ is $\Omega(1)$ or $1-o(1)$ (Open Question \ref{op:AC0}).

The results of this paper give nearly tight bounds on the shrinkage under $\mb R_p$ of depth-$2$ formulas (also known as DNF and CNF formulas), as well as the more general computational model of {\em decision lists}. 
Before stating our main result, it is instructive to first consider shrinkage in the simpler model of {\em decision trees}.
For a function $f$ on the hypercube (with domain $\{0,1\}^n$ and arbitrary range), we denote by $\DT(f)$ the minimum number of leaves (i.e., output nodes) in a decision tree that computes $f$. The following bound is shown by straightforward induction on $\DT(f)$. (I believe this bound is probably folklore, but could not find a reference so have included the short proof in Section \ref{sec:DTshrinkage}.)

\begin{thm}[Shrinkage of decision trees]\label{thm:DTshrinkage}
For all functions $f$ on the hypercube, 
\[
  \Ex[\ \DT(f{\uhr}\mb R_p)\ ] \le \DT(f)^{\log_2(1+p)}.
\]
This bound holds with equality when $f$ is a parity function.
\end{thm}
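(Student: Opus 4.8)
The plan is to prove the bound by (strong) induction on $\DT(f)$, with exponent $\alpha := \log_2(1+p)$. The base case $\DT(f) = 1$ is immediate: then $f$ is constant, so $f{\uhr}\mb R_p$ is constant as well and $\DT(f{\uhr}\mb R_p) = 1 = 1^{\alpha}$. For the inductive step I would fix an optimal decision tree $T$ for $f$ with $L := \DT(f)$ leaves whose root queries some variable $x_i$, branching into subtrees $T_0, T_1$ that compute $f_0 := f{\uhr}_{x_i=0}$ and $f_1 := f{\uhr}_{x_i=1}$. Writing $a := |T_0|$ and $b := |T_1|$, we have $a + b = L$ together with $\DT(f_0) \le a$ and $\DT(f_1) \le b$; in particular each subfunction has strictly smaller tree size, so the inductive hypothesis applies to it.

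Next I would condition on the action of $\mb R_p$ on the pivot variable $x_i$, setting $E_j := \Ex[\DT(f_j{\uhr}\mb R_p)]$ for $j \in \{0,1\}$ (the restriction now ranging over the $n-1$ other variables, which by independence is distributed as $\mb R_p$ regardless of what happens to $x_i$). With probability $p$ the variable $x_i$ stays free, in which case $f{\uhr}\mb R_p$ can be computed by querying $x_i$ and branching to optimal trees for $f_0{\uhr}\mb R_p$ and $f_1{\uhr}\mb R_p$, giving $\Ex[\DT(f{\uhr}\mb R_p)\mid x_i \text{ free}] \le E_0 + E_1$; with probability $\tfrac{1-p}{2}$ each, $x_i$ is fixed to $0$ or $1$, contributing exactly $E_0$ or $E_1$. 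Summing the three cases yields
$$\Ex[\DT(f{\uhr}\mb R_p)] \le \Bigl(p + \tfrac{1-p}{2}\Bigr)(E_0 + E_1) = \tfrac{1+p}{2}(E_0 + E_1).$$

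To close the induction I would apply the hypothesis to obtain $E_0 \le a^{\alpha}$ and $E_1 \le b^{\alpha}$, then use concavity of $t \mapsto t^{\alpha}$ (valid since $\alpha \in [0,1]$ for $p \in [0,1]$). Jensen's inequality gives $a^{\alpha} + b^{\alpha} \le 2^{1-\alpha}(a+b)^{\alpha}$, so that
$$\Ex[\DT(f{\uhr}\mb R_p)] \le \tfrac{1+p}{2}\cdot 2^{1-\alpha}(a+b)^{\alpha} = 2^{\alpha-1}\cdot 2^{1-\alpha} L^{\alpha} = L^{\alpha},$$
using the identity $\tfrac{1+p}{2} = 2^{\alpha-1}$. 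This is exactly the claimed bound. The argument is entirely elementary, and I do not expect a genuine obstacle; the only real "design" decision is the choice of exponent $\alpha = \log_2(1+p)$, which is forced as the unique value making the recursion close, i.e. making $\tfrac{1+p}{2}\cdot 2^{1-\alpha} = 1$ hold with equality.

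Finally, for the tightness claim I would take $f$ to be the $n$-bit parity function, for which $\DT(f) = 2^n$. A restriction leaving a set of $K$ variables free turns $f$ into a parity or negated parity on those variables, requiring exactly $2^K$ leaves, and $K$ is binomially distributed as the number of surviving variables. Since $K$ is a sum of $n$ independent $\mathrm{Bernoulli}(p)$ indicators, the expectation factorizes as $\Ex[2^K] = (1+p)^n$, each factor contributing $(1-p)\cdot 1 + p\cdot 2 = 1+p$. Hence $\Ex[\DT(f{\uhr}\mb R_p)] = (1+p)^n = (2^n)^{\log_2(1+p)} = \DT(f)^{\log_2(1+p)}$, so the inequality holds with equality, confirming that the exponent cannot be improved.
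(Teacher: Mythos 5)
Your proof is correct and takes essentially the same approach as the paper's: induction on decision tree size, conditioning on whether the pivot variable is left free or fixed by $\mb R_p$ to obtain the factor $\frac{1+p}{2}(E_0+E_1)$, and closing the recursion via concavity of $t \mapsto t^{\log_2(1+p)}$ (Jensen), with the identical binomial computation $\Ex[2^{\Bin(n,p)}] = (1+p)^n$ for the parity tightness claim. The only cosmetic difference is that you induct on the complexity measure $\DT(f)$ and invoke optimal trees for the subfunctions, whereas the paper inducts on a fixed tree and bounds the size of its syntactic restriction; the two bookkeeping choices are interchangeable here.
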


Decision lists are a natural computational model that has been studied in many contexts \cite{blum1992rank,bshouty1996subexponential,krause2006computational,hancock1996lower,rivest1987learning}.
A {\em decision list of size $m$} is a sequence 
$L = ((C_1,b_1),\dots,(C_m,b_m))$ where $b_1,\dots,b_m$ are arbitrary output values and $C_1,\dots,C_m$ are conjunctive clauses (\AND{}s of literals) such that $C_1 \vee \dots \vee C_m$ is a tautology.\footnote{In other words, every input $x \in \{0,1\}^n$ satisfies at least one of $C_1,\dots,C_m$.
Without loss of generality, $C_m$ may be chosen as the empty (always true) conjunctive clause $\top$. We allow $C_1 \vee \dots \vee C_m$ to be an arbitrary tautology in order to more naturally define the class of {\em orthogonal} decision lists later on in Section \ref{sec:orthogonal}.}
$L$~computes a function on the hypercube as follows: on input $x \in \{0,1\}^n$, the output is $b_i$ for the first index $i \in [m]$ such that $C_i(x)$ is satisfied.
We denote by $\DL(f)$ the minimum size of a decision list that computes~$f$.

Decision lists are a generalization decision trees: every decision tree is equivalent to a decision list of the same size, and thus $\DL(f) \le \DT(f)$ for all functions $f$ on the hypercube.\footnote{The name ``decision list'' elsewhere commonly refers to (what we call) width-1 decision trees, in which each clause is a single literal (i.e., an input variable $x_i$ or its negation $\BAR{x_i}$).
Whereas unbounded-width decision lists are a generalization decision trees, width-1 decision lists are instead a special case.}
Boolean decision lists, in which $b_1,\dots,b_m \in \{0,1\}$, are moreover a generalization of both DNF and CNF formulas.  In particular, {\em DNF formulas} are the special case where $b_1=\dots=b_{m-1}=1$ and $b_m=0$. Following custom, we count the {\em size} of a DNF formula as $m-1$ instead of $m$, and thus $\DL(f) \le \DNF(f)+1$ for all Boolean functions $f$.

Despite decision lists and DNF/CNF formulas being more complex computational models than decision trees, our main result  
shows that they shrink at a similar rate under $\mb R_p$.

\begin{thm}[Shrinkage of decision lists and DNF formulas]\label{thm:DLshrinkage}
For all functions $f$ on the hypercube, 
\[
  \Ex[\ \DL(f{\uhr}\mb R_p)\ ] \le \DL(f)^{\gamma(p)}
  \quad\text{ where }\quad \gamma(p) \defeq
  \ts\log_{\frac{2}{1-p}}(\frac{1+p}{1-p}).
\]
If $f$ is Boolean, then also
$
  \Ex[\ \DNF(f{\uhr}\mb R_p) + 1\ ] \le (\DNF(f) + 1)^{\gamma(p)}
$
(and similarly for $\CNF(\cdot)+1$).
\end{thm}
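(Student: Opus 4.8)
The plan is to prove the slightly stronger per-list statement that $\Ex[\DL(f{\uhr}\mb R_p)]\le m^{\gamma(p)}$ for \emph{every} decision list $L$ of size $m$ computing $f$ (not only the minimal one), by induction on $m$. Following the footnote, I assume without loss of generality that the last clause $C_m$ is the always-true clause $\top$; this has the effect that every tail $((C_2,b_2),\dots,(C_m,b_m))$ is itself a tautology, hence a bona fide decision list to which the induction hypothesis applies. I write $q\defeq\frac{1-p}{2}$ and $s\defeq\frac{1+p}{2}$, so under $\mb R_p$ each literal is set false with probability $q$, set true with probability $q$, and left free with probability $p=1-2q$, and $s=q+p$ is the probability that a given literal is \emph{not} falsified.

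For the inductive step, let $w$ be the width of the first clause $C_1$ and condition on the partial restriction $\sigma\defeq\mb R_p{\uhr}\Vars(C_1)$. I split into three cases according to the effect of $\sigma$ on $C_1$: (a) some literal is falsified, so $C_1{\uhr}\sigma\equiv\mathrm{false}$ (probability $1-s^w$); (b) every literal is set true, so $C_1{\uhr}\sigma\equiv\top$ (probability $q^w$); and (c) no literal is falsified but at least one is left free, so $C_1{\uhr}\sigma$ is a nonempty shorter clause (probability $s^w-q^w$). The crucial feature of conditioning on \emph{all} of $\Vars(C_1)$ at once -- rather than on a single variable and recombining, which would duplicate the tail and cost a spurious factor of $2$ -- is that in cases (a) and (c) the restricted function is computed by (at most one prepended clause followed by) the restricted tail $((C_2,b_2),\dots,(C_m,b_m)){\uhr}$, a decision list of at most $m-1$ clauses whose last clause is still $\top$, \emph{regardless of how variables are shared among the clauses}. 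The induction hypothesis bounds its expected size by $(m-1)^{\gamma}$, giving expected size $\le(m-1)^{\gamma}$ in case (a) and $\le 1+(m-1)^{\gamma}$ in case (c), while case (b) forces $f{\uhr}\mb R_p\equiv b_1$ with $\DL=1$. Collecting the three cases gives
\[
\Ex[\DL(f{\uhr}\mb R_p)]\le (1-s^w)(m-1)^{\gamma}+q^w+(s^w-q^w)\bigl(1+(m-1)^{\gamma}\bigr),
\]
which simplifies to $(1-q^w)(m-1)^{\gamma}+s^w$.

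It then remains to verify the purely analytic inequality $(1-q^w)(m-1)^{\gamma}+s^w\le m^{\gamma}$ for all integers $m\ge 1$ and $w\ge 0$; this is the step I expect to be the main obstacle. The value of $\gamma(p)$ enters through the clean identity $s=q^{1-\gamma}$, which is exactly equivalent to $\bigl(\frac{2}{1-p}\bigr)^{\gamma}=\frac{1+p}{1-p}$. Substituting $u\defeq q^w$ and $s^w=u^{1-\gamma}$ reduces the claim to $(1-u)(m-1)^{\gamma}+u^{1-\gamma}\le m^{\gamma}$ for $u\in(0,1]$. The left-hand side is concave in $u$, so its maximum is at an interior critical point; the inequality is delicate because it becomes \emph{tight in the joint limit} $m,w\to\infty$ with $w\sim\frac{\ln m}{\ln(2/(1-p))}$, the scaling at which the two error terms $q^w(m-1)^{\gamma}$ and $s^w$ balance and $s=q^{1-\gamma}$ forces equality to leading order (this also exhibits $\gamma$ as optimal for the recursion). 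I would close it via the concavity reduction together with a careful estimate of the interior maximum, checking that the residual $m^{\gamma}-(1-u)(m-1)^{\gamma}-u^{1-\gamma}$ is nonnegative for every $m$ and vanishes only in the limit.

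Finally, the DNF and CNF bounds follow from the identical induction specialized to lists with output pattern $1,\dots,1,0$ (resp.\ $0,\dots,0,1$), taking $\DNF(\cdot)+1$ (resp.\ $\CNF(\cdot)+1$) as the size measure. This pattern is preserved by restrictions and by the case (a)--(c) reductions -- prepending the clause $(\tilde C_1,1)$ in case (c) leaves all non-final outputs equal to $1$ -- and the constant $1$ arising in case (b) is assigned measure $1$, so the same recursion and the same analytic inequality yield $\Ex[\DNF(f{\uhr}\mb R_p)+1]\le(\DNF(f)+1)^{\gamma}$ and its CNF dual.
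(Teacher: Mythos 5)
Your plan diverges from the paper's proof (which is non-inductive: it bounds $\DL(f{\uhr}\rho)$ by the number of ``useful indices'' of a fixed minimal list, charges each clause $C_\ell$ a factor $(\tfrac{1+p}{1-p})^{|C_\ell|}$ against a carefully constructed probability measure $\mu_\ell$, and applies Jensen once, globally), but the inductive route has a genuine gap exactly where you wave it through. In cases (a) and (c) you apply the induction hypothesis to the restricted tail, conditioned on $\sigma=\mb R_p{\uhr}\Vars(C_1)$. Conditioned on $\sigma$, the only remaining randomness is an independent $\mb R_p$ on the variables \emph{outside} $\Vars(C_1)$; any variable of $C_1$ that $\sigma$ leaves free stays free with probability $1$. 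Since the tail clauses $C_2,\dots,C_m$ may contain those very variables, the list $((C_2,b_2),\dots,(C_m,b_m)){\uhr}\sigma$ is a decision list over a variable set containing variables that the residual restriction will never touch, and the induction hypothesis --- which presupposes that \emph{every} variable of the function is hit by $\mb R_p$ --- does not apply. Worse, the comparison goes the wrong way: restricting additional variables can only decrease $\DL$, so the IH bound $\Ex_{\tau\sim\mb R_p}[\,\DL((g{\uhr}\sigma){\uhr}\tau)\,]\le(m-1)^{\gamma}$ concerns a quantity that is \emph{at most} the conditional expectation you need, and hence yields no upper bound on it. Your parenthetical ``regardless of how variables are shared among the clauses'' covers the combinatorial fact (the restricted tail computes the restricted function) but not this distributional one: when clauses share variables, conditioning on the fate of $\Vars(C_1)$ skews the restriction the tail sees (in case (c) the shared variables are never falsified and at least one is pinned free). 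Consequently your recursion $\Ex[\DL(f{\uhr}\mb R_p)]\le(1-q^w)(m-1)^{\gamma}+s^w$ is unproved. The recursion one \emph{can} legitimately derive without conditioning --- namely $\Ex[\DL(f{\uhr}\mb R_p)]\le(m-1)^{\gamma}+s^w-q^w$, via the pointwise bound $\DL(f{\uhr}\rho)\le \mathds{1}[C_1{\uhr}\rho\not\equiv 0]+\DL(g{\uhr}\rho)$, the unconditional IH, and $\DL\ge 1$ --- is too weak: $\max_u(u^{1-\gamma}-u)=\gamma(1-\gamma)^{(1-\gamma)/\gamma}$ is a constant, while the available slack $m^{\gamma}-(m-1)^{\gamma}=O(m^{\gamma-1})$ vanishes as $m\to\infty$, so the induction fails for $w\approx\log_{1/q}m$. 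This shared-variable correlation is precisely the difficulty the paper's machinery (the map $\rho\mapsto\rho^{(\ell)}$, the star-counting ratio, and the measure $\mu$) is built to overcome; note also that the analogous induction for decision trees (Theorem \ref{thm:DTshrinkage}) works only because there one may assume WLOG that the subtrees do not mention the root variable --- an assumption with no analogue for decision lists.

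Ironically, the step you flag as the main obstacle is fine: $(1-u)(m-1)^{\gamma}+u^{1-\gamma}\le m^{\gamma}$ does hold for all $u\in(0,1]$, $m\ge 2$, $\gamma\in[0,1]$. The concave left side is maximized at $u^*=(1-\gamma)^{1/\gamma}/(m-1)$, where the claim reduces, using $m^{\gamma}-(m-1)^{\gamma}\ge\gamma m^{\gamma-1}$, to $(1-\gamma)^{1/\gamma}\le\frac{m-1}{m}$, which follows from $1-\gamma\le 2^{-\gamma}$. So the analytic half of your plan goes through; what is missing is the probabilistic half, i.e., any valid derivation of the recursion itself. (A smaller point: in your DNF adaptation, case (b) with $b_1=1$ yields the constant-$1$ function, whose measure under the paper's conventions is $\DNF(\ul 1)+1=2$, not $1$, so even granting the recursion the DNF bookkeeping needs rechecking.)
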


\begin{figure}[H]\label{fig:gamma}
\begin{center}
\includegraphics[scale=0.25]{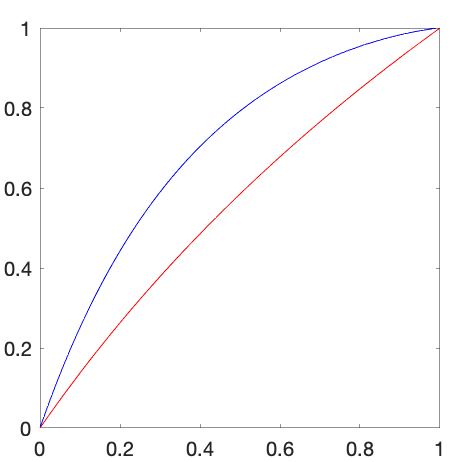}
\caption{Plots of $\gamma(p) \defeq \ts\log_{\frac{2}{1-p}}(\frac{1+p}{1-p})$ (blue) and $\log_2(1+p)$ (red)}\vspace{-.1in}
\end{center}
\end{figure}

Note that $\gamma : [0,1] \to [0,1]$ is an increasing function with $\gamma(0) = 0$ and $\gamma(1) = 1$ (see Figure \ref{fig:gamma}). 
The bound of Theorem \ref{thm:DLshrinkage} is thus nontrivial for all values of $p \in (0,1)$. 
This bound is moreover close to optimal: $\log_2(1+p)$ is a lower bound on the best possible function $\gamma(p)$ 
(Section \ref{sec:lb}).  As corollaries, we obtain additional bounds $\mathsf{ODL}(f)^{\gamma(p)}$ and $\mathsf{wODL}(f)^{\gamma(p)}$ on the shrinkage of orthogonal and weakly orthogonal decision lists (Corollary \ref{cor:orthogonal}), as well as $(\mc L_2(f)+1)^{\gamma(2p)}$ for depth-$2$ formula leaf-size (Corollary \ref{cor:depth2}).

Theorem \ref{thm:DLshrinkage} yields the following bounds for particular settings of $p$ in terms of $m = \DL(f)$:
\[
\Ex[\ \DL(f{\uhr}\mb R_p)\ ] 
\le 
\begin{cases}
2 
&\text{for }p = O({\tsfrac{1}{\log m}})
,\vphantom{\Big|}\\
\sqrt{m}
&\text{for }p = \sqrt{5}-2 \approx 0.24
,\vphantom{\Big|}\\
m/2 
&\text{for }p = 1 - O({\tsfrac{\log\log m}{\log m}})
,\vphantom{\Big|}\\
m - 1
&\text{for }p = 1 - O({\tsfrac{\log m}{m}})
.\vphantom{\Big|}
\end{cases}
\]
For small $p = O(1/\log m)$, a variant of H{\aa}stad's Switching Lemma 
(discussed below)
actually implies a stronger inequality $\Ex[\ \DT(f{\uhr}\mb R_p)\ ]$ $\le 2$ with $\DT$ in place of $\DL$ (Corollary \ref{cor:SLsize}). 
Theorem \ref{thm:DLshrinkage} is mainly interesting for larger values of $p$.
In particular, the ``mild random restriction'' regime when $p$ is $\Omega(1)$ or $1 - o(1)$ has important applications in pseudorandomness \cite{gopalan2012better,razborov2015pseudorandom}, DNF sparsification \cite{gopalan2013dnf,lovett2020decision} and hypercontractivity \cite{o2014analysis}.

\subsection{Switching lemmas and size measures vs.\ width/depth measures}\label{sec:SL}

We have so far discussed the shrinkage of various complexity measures 
under the $p$-random restriction $\mb R_p$. 
The switching lemmas stated below can be viewed as apples-to-oranges shrinkage results that bound one complexity measure on $f{\uhr}\mb R_p$ in terms of another complexity measure on $f$.
Here there is a useful distinction between ``size measures'' $\DT$, $\DL$, $\DNF$
and their corresponding ``width/depth measures'', denoted by $\DTd$, $\DLw$, $\DNFw$.
Width/depth measures are typically related to the logarithm of size measures: functions with size complexity $m$ are approximable by (or in some cases equivalent to) functions with width/depth complexity $O(\log m)$.
H{\aa}stad's Switching Lemma \cite{Hastad86} gives a tail bound on the decision tree size of $f{\uhr}\mb R_p$ in terms of the decision list width of $f$.\footnote{In its application to $\ACzero$ circuit lower bounds, Theorem \ref{thm:SL} is usually stated (more narrowly) in the form 
\[
\Pr[\ \CNFw(f{\uhr}\mb R_p) \ge t\ ] \le O(p\cdot\DNFw(f))^t
\]
for Boolean functions $f$.  The name ``Switching Lemma'' refers to the conversion of a DNF formula to a CNF formula. 
The more general bound stated in Theorem \ref{thm:SL} is implicit in proofs of \cite{Hastad86}.}

\begin{thm}[Switching Lemma \cite{Hastad86}]\label{thm:SL}
For all functions $f$ on the hypercube and $t \in \N$,
\[
  \Pr[\ \DTd(f{\uhr}\mb R_p) \ge t\ ]
  \le
  O(p\cdot\DLw(f))^t.
\]
\end{thm}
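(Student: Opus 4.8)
The plan is to prove this via the classical encoding (injection) argument of Håstad and Razborov, adapted from width-$w$ DNFs to width-$w$ decision lists. First I would reduce the statement to a bound on the depth of a \emph{canonical decision tree}. Fix a decision list $L = ((C_1,b_1),\dots,(C_m,b_m))$ of width $w \defeq \DLw(f)$ computing $f$, and for a restriction $\rho$ define $\CDT(L,\rho)$ recursively: apply $\rho$, locate the first clause $C_i$ not falsified by $\rho$; if it is identically true, output a single leaf $b_i$; otherwise query all (at most $w$) free variables of $C_i{\uhr}\rho$, route the unique assignment satisfying $C_i$ to a leaf $b_i$, and recurse on $L$ under $\rho$ extended by each of the remaining (falsifying) assignments. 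Since $\CDT(L,\rho)$ computes $f{\uhr}\rho$, we have $\DTd(f{\uhr}\rho) \le \depth(\CDT(L,\rho))$, so it suffices to bound $\Pr[\ \depth(\CDT(L,\mb R_p)) \ge t\ ]$.

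Next I would build an injection from the set of ``bad'' restrictions $B \defeq \{\rho : \depth(\CDT(L,\rho)) \ge t\}$ into pairs $(\rho^\ast, \ell)$, where $\rho^\ast$ is a restriction with exactly $t$ fewer stars than $\rho$ and $\ell$ ranges over a label set of size at most $(Cw)^t$. Given $\rho \in B$, take the leftmost path from the root of $\CDT(L,\rho)$ to a node at depth $t$; it decomposes into blocks $S_1,\dots,S_r$, where $S_j$ is the set of free variables queried in the $j$-th clause $C_{i_j}$ encountered (the final block possibly truncated to a subset), with $\sum_j |S_j| = t$. Let $\rho^\ast$ agree with $\rho$ except that on each $S_j$ it takes the \emph{unique} assignment satisfying $C_{i_j}$; thus $\rho^\ast$ fixes exactly the $t$ stars of $\rho$ queried along the path. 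For the label $\ell$ I would record, block by block, which variables of the clause constitute $S_j$ (a nonempty subset of its at most $w$ positions) together with the bits the path actually assigned to $S_j$.

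The crux --- and the step I expect to be the main obstacle --- is verifying injectivity, i.e.\ producing a decoder that recovers $\rho$ from $(\rho^\ast,\ell)$. The key point is that $\rho^\ast$ \emph{satisfies} every visited clause, so running the canonical construction on $\rho^\ast$ makes $C_{i_1}$ the first non-falsified clause and moreover satisfied (clauses before it were falsified by fixed, non-star literals of $\rho$, hence remain falsified under $\rho^\ast$). The subset in $\ell$ then identifies $S_1$ inside $C_{i_1}$ and the recorded bits give the path's assignment there; overwriting $S_1$ with those (falsifying) bits re-exposes $C_{i_2}$ as the next satisfied clause, and iterating recovers all blocks. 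Setting $\bigcup_j S_j$ back to stars and keeping $\rho^\ast$ elsewhere reconstructs $\rho$, proving the map injective. For the count I would bound $|\mathrm{Labels}| \le \sum_{r\ge 1}\sum_{s_1+\dots+s_r = t}\prod_j \binom{w}{s_j}2^{s_j}$, the coefficient of $x^t$ in $\sum_{r\ge 1}\bigl((1+2x)^w-1\bigr)^r$; evaluating this generating function at a suitable $x = \Theta(1/w)$ yields $|\mathrm{Labels}| \le (Cw)^t$ for an absolute constant $C$.

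Finally I would assemble the probability bound. Since turning a star into a fixed bit multiplies the $\mb R_p$-weight by $\frac{(1-p)/2}{p}$, each $\rho \in B$ satisfies $\Pr[\rho] = (\frac{2p}{1-p})^t\,\Pr[\rho^\ast]$; summing over $B$, using injectivity together with $\sum_{\rho^\ast}\Pr[\rho^\ast]\le 1$, gives $\Pr[B] \le |\mathrm{Labels}|\cdot(\frac{2p}{1-p})^t \le (Cw)^t (4p)^t = O(pw)^t$, where $\frac{2p}{1-p}\le 4p$ uses that we may assume $p \le 1/2$ (when $pw$ exceeds the hidden constant the claimed bound is trivially at least $1$). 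This is the desired inequality.
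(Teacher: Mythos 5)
Your proposal is correct, but a comparison with ``the paper's own proof'' is not possible in the usual sense: the paper does not prove Theorem~\ref{thm:SL} at all. It is quoted as a known result, with a footnote explaining that this general form (decision-list width on the right-hand side, arbitrary output values, a conclusion about $\DTd$ rather than $\CNFw$) is \emph{implicit} in the proofs of \cite{Hastad86}. What you have done is reconstruct, from scratch, the standard Razborov-style encoding proof and adapt it to decision lists --- which is precisely the argument the citation gestures at. The adaptation is sound: the canonical decision tree repeatedly locates the first clause of $L$ \emph{not falsified} by the current restriction (clauses skipped because they are falsified remain falsified under every extension, which is exactly what makes the decoder's ``first non-falsified clause'' rule well defined and is the only point where the decision-list structure, as opposed to DNF structure, matters); the block decomposition of a depth-$t$ path yields the subset-plus-bits label; the weight-transfer identity $\Pr[\mb R_p=\rho]=\bigl(\tfrac{2p}{1-p}\bigr)^{t}\Pr[\mb R_p=\rho^{*}]$ and the reduction to $p\le 1/2$ are handled correctly; and the generating-function bound $[x^{t}]\sum_{r\ge 1}\bigl((1+2x)^{w}-1\bigr)^{r}\le (O(w))^{t}$ at $x=\Theta(1/w)$ is the standard way to absorb the sum over block profiles into a bound of the required form.

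One small imprecision, which is not a gap: you say the key point is that $\rho^{*}$ makes every visited clause ``satisfied.'' That holds for every block except possibly the last one: when the path is truncated mid-clause, $\rho^{*}$ satisfies only the literals of $C_{i_r}$ on $S_r$, so $C_{i_r}{\uhr}\rho^{*}$ is non-falsified but need not be identically $1$. Your decoder is unaffected, because it identifies $C_{i_j}$ as the first clause not falsified by the current restriction $\rho^{*}_j$ and never requires full satisfaction --- but the invariant should be stated in that weaker form uniformly, since as written the claim is false for the truncated block. A second cosmetic remark: the path bits recorded for the final block are never used by the decoder (once all blocks are identified, $\rho$ is recovered by starring $\bigcup_j S_j$ in $\rho^{*}$), so you could drop them; keeping them only costs a harmless factor absorbed into $O(\cdot)^{t}$.
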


A variant of the Switching Lemma with $\log \DL(f)$ in place of $\DLw(f)$ was proved in \cite{rossman:LIPIcs:2019:10823}.

\begin{thm}[Switching Lemma in terms of decision list size \cite{rossman:LIPIcs:2019:10823}]\label{thm:SLsize}
For every function $f$ on the hypercube and $t \in \N$,
\[
  \Pr[\ \DTd(f{\uhr}\mb R_p) \ge t\ ]
  \le
  O(p\cdot\log \DL(f))^t.
\]
\end{thm}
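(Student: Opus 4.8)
The plan is to adapt the compression (encoding) proof of H\aa{}stad's Switching Lemma (Theorem~\ref{thm:SL}), converting its per-step \emph{width} charge into a per-step charge of only $O(\log\DL(f))$. First I would fix a decision list $L$ of size $m = \DL(f)$ computing $f$ and, for each restriction $\rho$, define the \emph{canonical decision tree} $\CDT(L{\uhr}\rho)$: repeatedly locate the first clause $C_i$ not yet falsified by the current partial assignment; if $C_i$ is already satisfied, output $b_i$ (a leaf); otherwise query the first still-free variable of $C_i$ in a fixed variable order, and branch. Since $\CDT(L{\uhr}\rho)$ is a legitimate decision tree for $f{\uhr}\rho$, we have $\DTd(f{\uhr}\rho) \le \depth(\CDT(L{\uhr}\rho))$, so the event $\DTd(f{\uhr}\rho) \ge t$ is contained in the event that $\CDT(L{\uhr}\rho)$ has a root-to-leaf path of length at least $t$. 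I would then fix a canonical such path and truncate it to its first $t$ queried variables $z_1,\dots,z_t$, with path-values $a_1,\dots,a_t$.

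Next I would run the standard counting. Set $\rho^* \defeq \rho \cup \{z_1 \mapsto a_1,\dots,z_t \mapsto a_t\}$, a restriction fixing exactly $t$ more coordinates than $\rho$. Under $\mb R_p$, a restriction with $k$ free variables has probability $p^k(\frac{1-p}{2})^{n-k}$, so fixing $t$ additional coordinates multiplies this probability by $(\frac{1-p}{2p})^t$. Hence, if $\rho \mapsto (\rho^*,\alpha)$ can be made injective for advice $\alpha$ ranging over a set $A$, then
\[
  \Pr[\,\DTd(f{\uhr}\mb R_p) \ge t\,] \le \Big(\frac{2p}{1-p}\Big)^t\,|A|,
\]
and since $\frac{2p}{1-p} = O(p)$ for bounded $p$, it remains only to produce a reconstruction procedure whose advice set satisfies $|A| \le (O(\log m))^t$.

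The main obstacle---indeed the whole content of the lemma---is keeping $|A|$ down to $(O(\log m))^t$, rather than the $(\text{width})^t$ of the width version or the $m^t$ of a naive clause-index encoding. The reconstruction would replay $L$ on $\rho^*$ to recover the visited clauses and, using the fixed within-clause ordering, obtain the identities of $z_1,\dots,z_t$ essentially for free; the real difficulty is that re-inserting a star reactivates clauses that were falsified under $\rho^*$, so the replay must be told which reactivated clause becomes the next non-falsified one. I expect the crux to be an amortized argument showing that these disambiguations cost only $O(\log m)$ choices \emph{per recovered star}---exploiting that the clause indices along the path are strictly increasing and that there are only $m$ clauses in all---so that the $t$ steps contribute $(O(\log m))^t$ and no more. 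The tight instance $f = \mathrm{PAR}(x_1,\dots,x_{\log m})$, for which $\DL(f) = m$ and $\Pr[\,\DTd(f{\uhr}\mb R_p) \ge t\,] = \binom{\log m}{t}(1+o(1))\,p^t = (O(p\log m))^t$, both confirms that $\log m$ is the correct per-step cost and suggests that the universe of relevant choices at each step should be thought of as a set of size $O(\log m)$.
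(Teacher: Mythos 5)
First, for orientation: this paper does not actually prove Theorem \ref{thm:SLsize} --- it is imported from \cite{rossman:LIPIcs:2019:10823} --- and the only thing the paper says about that proof (Section \ref{sec:SL}) is that it rests on a convexity argument similar to the one used here for Theorem \ref{thm:DLshrinkage}. Judged on its own, your proposal sets up the standard Razborov--H{\aa}stad encoding frame correctly (canonical decision tree, the map $\rho \mapsto (\rho^\ast,\alpha)$, the factor $(\tfrac{2p}{1-p})^t$), but it stops exactly where the content of the theorem lies, and the two ideas you offer for bounding the advice set both fail. (a) The identities of the queried variables are \emph{not} ``essentially free'': after the decoder identifies the clause $C_i$ responsible for a segment (the first clause satisfied by $\rho^\ast$), every variable of $C_i$ is set by $\rho^\ast$ in agreement with $C_i$'s unique satisfying assignment, whether it was set by the original $\rho$ or filled in by the encoder; the advice must therefore specify which subset of $\mr{Vars}(C_i)$ consists of former stars. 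This costs $\Theta(|C_i|)$ choices per star --- this is precisely where the width dependence of Theorem \ref{thm:SL} comes from --- and here $|C_i|$ is unbounded. (b) Amortizing over clause indices via ``strictly increasing, only $m$ clauses'' bounds the number of possible index sequences by roughly $\binom{m}{t} \ge (m/t)^t$, i.e.\ $\Theta(m/t)$ choices per star, which is enormously larger than $O(\log m)$. So as written your argument yields at best a bound of the form $(O(p)\cdot \mathrm{width})^t$ or $(O(p)\cdot m)^t$, not $(O(p\log m))^t$.

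The missing idea is that \emph{wide clauses must be shown to be irrelevant with the right quantitative weight}, and this is not a bookkeeping fix. A tempting patch --- discard all clauses of width $\gg \log m$ since each is non-falsified with probability at most $(\tfrac{1+p}{2})^{\mathrm{width}}$, then union-bound over the $m$ clauses --- does not work, because the error target $(O(p\log m))^t$ tends to $0$ as $p \to 0$ and as $t$ grows, while that union bound does not; this is the same obstruction behind the paper's remark that width reduction derives Theorem \ref{thm:SLsize} from Theorem \ref{thm:SL} only in the regime $t \le O(\log\DL(f))$. The known proof (and, analogously, this paper's proof of Theorem \ref{thm:DLshrinkage}) avoids encoding and amortization altogether: it attaches to the clauses of $L$ a probability distribution $\mu$ related to the distribution of the first satisfied clause, uses that any clause carrying mass $\mu_\ell$ has width at most $\log_{2/(1-p)}(1/\mu_\ell)$ (cf.\ inequality (\ref{eq:w})), and then applies Jensen's inequality to average a width-dependent bound against $\mu$ --- this convexity step is what converts ``width'' into ``$\log$ of size''. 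If you want to keep your compression framework, you would have to reproduce that probabilistic weighting of clauses by width inside the advice analysis, rather than rely on the combinatorics of increasing indices.
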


We remark that Theorem \ref{thm:SLsize} follows directly from Theorem \ref{thm:SL} for $t \le O(\log\DL(f))$ (by the standard width reduction argument), but not for larger $t$.  Obtaining a tail bound for all $t \in \N$ is essentially to the following:

\begin{cor}[Decision tree size of decision lists]\label{cor:SLsize}
For all functions $f$ on $\{0,1\}^n$,
\[
  \Ex[\ \DT(f{\uhr}\mb R_p)\ ] \le 2
  \quad\text{ and }\quad
  \DT(f) \le O(2^{(1-p)n})
  \quad\text{ where }\quad p = O(1/\log \DL(f)),\\
  \ 
\]
\end{cor}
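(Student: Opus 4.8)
The plan is to prove the two inequalities in sequence. The expectation bound $\Ex[\DT(f{\uhr}\mb R_p)] \le 2$ is essentially a repackaging of the decision-list-size Switching Lemma (Theorem \ref{thm:SLsize}), and the bound on $\DT(f)$ itself I would then deduce from the expectation bound by a short averaging argument. Throughout write $m = \DL(f)$ and recall the hypothesis $p = O(1/\log m)$.

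For the first inequality, set $g = f{\uhr}\mb R_p$. Since a decision tree of depth $d$ has at most $2^d$ leaves, we have the pointwise bound $\DT(g) \le 2^{\DTd(g)}$, so it suffices to control $\Ex[2^{\DTd(g)}]$. Abel summation over the nonnegative integer random variable $\DTd(g)$ gives
\[
\Ex[2^{\DTd(g)}] = 1 + \sum_{t \ge 1} 2^{t-1}\Pr[\DTd(g) \ge t].
\]
By Theorem \ref{thm:SLsize} there is a constant $C$ with $\Pr[\DTd(g) \ge t] \le (Cp\log m)^t$. Choosing the constant hidden in $p = O(1/\log m)$ small enough that $\alpha \defeq Cp\log m \le 1/3$ turns the series into a convergent geometric one, $\sum_{t\ge1}2^{t-1}\alpha^t = \alpha/(1-2\alpha) \le 1$, whence $\Ex[\DT(g)] \le \Ex[2^{\DTd(g)}] \le 2$.

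For the second inequality, I would reconstruct a decision tree for $f$ out of the restricted functions. For a set $S \subseteq [n]$ of \emph{free} coordinates and an assignment $\sigma \in \{0,1\}^{[n]\setminus S}$, write $f_{S,\sigma}$ for the corresponding restriction of $f$. Querying all coordinates of $[n]\setminus S$ and appending an optimal tree for each $f_{S,\sigma}$ at the corresponding leaf yields $\DT(f) \le \sum_{\sigma} \DT(f_{S,\sigma}) \eqdef T_S$ for \emph{every} $S$. On the other hand, decomposing $\mb R_p$ into its free set $S$ and the values assigned off it,
\[
\Ex[\DT(f{\uhr}\mb R_p)] = \sum_{S \subseteq [n]} p^{|S|}\Bigl(\tfrac{1-p}{2}\Bigr)^{n-|S|} T_S \ge \DT(f)\sum_{S\subseteq[n]} p^{|S|}\Bigl(\tfrac{1-p}{2}\Bigr)^{n-|S|} = \DT(f)\Bigl(\tfrac{1+p}{2}\Bigr)^n.
\]
Combining with the first inequality gives $\DT(f) \le 2\,(2/(1+p))^n = 2\cdot 2^{(1-\log_2(1+p))n}$, and since $\log_2(1+p) \ge p$ for all $p \in [0,1]$ (the concave function $\log_2(1+p)-p$ vanishes at both endpoints and is therefore nonnegative in between), this is at most $2\cdot 2^{(1-p)n} = O(2^{(1-p)n})$.

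The only real work lies in the first inequality, where the point is to upgrade the tail bound of Theorem \ref{thm:SLsize} on the decision tree \emph{depth} into an expectation bound on the decision tree \emph{size}: the factor $2^{\DTd}$ is cancelled by the super-geometric decay of the tail, provided $p$ is smaller than $1/\log m$ by a sufficiently large constant factor. The second inequality is then a pure counting identity for $\mb R_p$ combined with the elementary estimate $\log_2(1+p) \ge p$, and it requires no smallness hypothesis on $p$ beyond what it inherits from the first part.
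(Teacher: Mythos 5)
Your proof is correct and follows exactly the route the paper intends: the paper gives no explicit proof of Corollary~\ref{cor:SLsize}, remarking only that it is ``essentially equivalent'' to having the tail bound of Theorem~\ref{thm:SLsize} for all $t \in \N$, and your first part (bounding $\DT \le 2^{\DTd}$ and summing the geometric series $\sum_t 2^{t-1}(Cp\log m)^t \le 1$ once the hidden constant makes $Cp\log m \le 1/3$) is precisely that derivation. Your second part---recombining restricted trees to get $\DT(f) \le \Ex[\DT(f{\uhr}\mb R_p)]\cdot(2/(1+p))^n$ and invoking $\log_2(1+p) \ge p$---is the standard averaging argument that yields $\DT(f) \le O(2^{(1-p)n})$, so the proposal correctly fills in both halves of what the paper leaves implicit.
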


As previously mentioned, Corollary \ref{cor:SLsize} strengthen the bound $\Ex[\ \DL(f{\uhr}\mb R_p)\ ] \le 2$ for $p = O(1/\log \DL(f))$ that follows from Theorem \ref{thm:DLshrinkage} (albeit for $p$ that is a constant factor smaller).  
However, note that Corollary \ref{cor:SLsize} is
trivial for $p$ above $\Omega(1/\log \DL(f))$.  A different switching lemma for large $p$ (even $1-o(1)$) in terms of $\DNFw(f)$ was introduced by Segerlind, Buss and Impagliazzo \cite{segerlind2004switching} and quantitatively improved by Razborov \cite{razborov2015pseudorandom}.  It is unclear if these switching lemmas for ``mild random restriction'' have analogues in terms of $\log\DL(f)$; if so, that might entail a shrinkage bound for $\DL$ that is nontrivial for all $p \in (0,1)$, although potentially weaker than Theorem \ref{thm:DLshrinkage}.

Our proof of Theorem \ref{thm:DLshrinkage} involves an application of Jensen's inequality with respect to a certain carefully defined probability distribution on the set of clauses in a decision list $L$.  This distribution is related to (but not identical to) the distribution of the first satisfied clause of $L$ under a uniform random input.
A similar convexity argument appears in the proof of Theorem \ref{thm:SLsize} in \cite{rossman:LIPIcs:2019:10823}.
A second key idea, the notion of ``useful indices'' of $L$ under a restriction $\rho$, comes from a recent paper of Lovett, Wu and Zhang \cite{lovett2020decision} who proved the following result as the main lemma in establishing tight bound on the sparsification of bounded-width decision lists.

\begin{thm}[Decision list shrinkage in terms of width \cite{lovett2020decision}]\label{thm:DLwidth}
For every function $f$ on the hypercube,
\[
  \Ex[\ \DL(f{\uhr}\mb R_p)\ ] 
  \le
  \left(\frac{4}{1-p}\right)^{\DLw(f)}.
\]
\end{thm}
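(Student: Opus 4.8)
The plan is to fix a width-$w$ decision list $L=((C_1,b_1),\dots,(C_m,b_m))$ computing $f$ with $w=\DLw(f)$, and to bound $\DL(f{\uhr}\rho)$ by counting the \emph{useful} indices of $L$ under $\rho$ in the sense of Lovett--Wu--Zhang. First I would call an index $i$ useful if some completion of $\rho$ makes $C_i$ its first satisfied clause and $C_i{\uhr}\rho$ is not rendered redundant by a later clause of the same output; kept in order and restricted by $\rho$, the useful clauses compute $f{\uhr}\rho$, so $\DL(f{\uhr}\rho)\le|U(\rho)|$. By linearity of expectation it then suffices to prove $\Ex[\,|U(\mb R_p)|\,]=\sum_i\Pr[i\in U(\mb R_p)]\le(4/(1-p))^w$.

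For the count, two events are necessary for $i\in U$: that $\mb R_p$ does not falsify $C_i$, which has probability $((1+p)/2)^{w_i}$ where $w_i\le w$ is the width of $C_i$; and that $C_i$ is \emph{reachable}, meaning no earlier clause is forced to $\top$ and no earlier equal-output clause already dominates it. I would bound the sum by induction on $w$: revealing $\mb R_p$ on the variables of the first clause, its full firing (probability at least $((1-p)/2)^w$) collapses $f{\uhr}\rho$ to a constant, while its death deletes it and leaves a width-$\le w$ residual list to recurse on. The geometric decay of ``reaching depth $k$ with no forced-true clause'' should contribute one factor of $\tfrac{1+p}{1-p}\le\tfrac{4}{1-p}$ per unit of surviving width; the clean disjoint-variable case $f=\OR(x_1,\dots,x_n)$ already exhibits this, since there $\sum_i((1+p)/2)^i=\tfrac{1+p}{1-p}$.

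The hard part will be two correlation effects, and this is exactly where the loose constant $4$ (rather than $\tfrac{1+p}{1-p}$) is spent. First, in the inductive step a \emph{surviving} first clause carries its still-free literals into the residual function without re-randomizing them, so one cannot recurse naively against a fresh copy of $\mb R_p$; the useful-index bookkeeping is what lets me charge these surviving clauses against the killing probability instead. Second, when distinct clauses share variables the ``not falsified'' and ``reachable'' events become correlated, so the clean geometric estimate must be replaced by a conditional bound robust to these correlations, and the non-redundancy built into the definition of useful is essential to stop $|U(\rho)|$ from blowing up (as it does for a width-inflated list computing a low-width function). I expect keeping the per-level factor at $\tfrac{4}{1-p}$ while absorbing both effects to be the crux of the argument.
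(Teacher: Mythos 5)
You have a genuine gap: your proposal sets up the right framework but defers its core. The scaffolding --- fix a width-$w$ list $L$, define useful indices, note $\DL(f{\uhr}\rho)\le|U(\rho)|$, and reduce by linearity to bounding $\sum_{\ell}\Pr[\ell\in U(\mb R_p)]$ --- is exactly the Lovett--Wu--Zhang setup, which this paper reproduces in Section \ref{sec:DLshrinkageproof} (one quibble: the standard definition of useful has no ``redundancy'' side-condition; $\ell$ is useful iff some $x$ consistent with $\rho$ satisfies $C_\ell$ and falsifies $C_1,\dots,C_{\ell-1}$, and your extra pruning condition is both unnecessary and a risk to the claim that the restricted useful clauses still compute $f{\uhr}\rho$). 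But the theorem \emph{is} the estimate $\sum_\ell\Pr[\ell\in U(\mb R_p)]\le(4/(1-p))^w$, and for that you offer only an induction sketch plus an honest admission that it breaks on the two correlation effects. The sketch cannot be repaired as stated: killing the first clause shortens the list but does not reduce its width, so your induction has no decreasing parameter matching the bound $(4/(1-p))^w$; a partially restricted first clause leaves conditioned, non-re-randomized variables in the residual list; and shared variables destroy the geometric-decay heuristic that your disjoint-variable \textsc{or} example relies on. Naming these obstacles and calling them ``the crux'' is accurate, but it means no proof has been given.

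The missing idea is a direct, non-inductive charging argument, and it is exactly what this paper carries out in its proof of the stronger, size-based Theorem \ref{thm:DLshrinkage} (the paper itself only cites Theorem \ref{thm:DLwidth} from \cite{lovett2020decision}, but its Section \ref{sec:DLshrinkageproof} machinery yields it immediately). For $\ell\in U(\rho)$, let $\rho^{(\ell)}$ be $\rho$ augmented by the satisfying assignment of $C_\ell$. Then $U(\rho^{(\ell)})=U(\rho)\cap[\ell]$, hence $\max U(\rho^{(\ell)})=\ell$ and $C_\ell{\uhr}\rho^{(\ell)}\equiv 1$; moreover $\Pr[\mb R_p=\rho]=(\tsfrac{2p}{1-p})^{|\mr{Stars}(\rho)\cap\mr{Vars}(C_\ell)|}\Pr[\mb R_p=\rho^{(\ell)}]$. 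Setting $\mu_\ell\defeq\Pr_{\rho\sim\mb R_p}[\,\max U(\rho)=\ell \text{ and } C_\ell{\uhr}\rho\equiv 1\,]$ and summing over which stars among $\mr{Vars}(C_\ell)$ get filled in, one obtains the per-clause bound
\[
  \Pr_{\rho\sim\mb R_p}[\ \ell\in U(\rho)\ ]
  \ \le\
  \mu_\ell\Big(\frac{1+p}{1-p}\Big)^{|C_\ell|}
  \ \le\
  \mu_\ell\Big(\frac{1+p}{1-p}\Big)^{w},
\]
which is inequality (\ref{eq:ellmu}) of the paper. Since the events $\{\max U(\rho)=\ell\text{ and }C_\ell{\uhr}\rho\equiv1\}$ are mutually exclusive, $\sum_\ell\mu_\ell\le 1$, and summing the display over $\ell$ gives $\Ex[\,|U(\mb R_p)|\,]\le(\tsfrac{1+p}{1-p})^{w}\le(\tsfrac{4}{1-p})^{w}$ --- no induction at all, and every variable-sharing correlation is absorbed because each index $\ell$ is charged to its own disjoint event. (This even improves the stated constant, replacing the numerator $4$ by $1+p\le 2$.) This switching-lemma-style reweighting map $\rho\mapsto\rho^{(\ell)}$ is the single idea your proposal lacks; with it, the rest of your outline goes through essentially verbatim.
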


Note that our main result, Theorem \ref{thm:DLshrinkage}, stands in relation to Theorem \ref{thm:DLwidth} just as Theorem \ref{thm:SLsize} does to Theorem \ref{thm:SL}: in both cases we are essentially replacing $\DLw(f)$ with $\log \DL(f)$.

\subsection{Other related work}

There are different ways to quantify the effect of random restrictions on complexity measures.  Instead of bounding expectation, one may show that shrinkage occurs with high probability.  For DeMorgan formulas, high probability shrinkage results were shown in \cite{santhanam2010fighting,komargodski2013average}.  
Shrinkage results and switching lemmas have also been studied for random restrictions other than $\mb R_p$ (see \cite{beame1994switching}). 
Very interesting recent work of Filmus, Meir and Tal \cite{FMT2021} extends the technique of H{\aa}stad \cite{hastad1998shrinkage} to obtain $p^{2-o(1)}$ factor shrinkage bounds for DeMorgan formulas under a family of pseudorandom projections that generalize $\mb R_p$.

\section{Preliminaries}\label{sec:prelims}

Throughout this paper, $p$ is an arbitrary parameter in $[0,1]$. All inequalities involving $p$ hold for all values in $[0,1]$.
We often use the special case of Jensen's inequality $\Ex[\,X^c\,] \le \Ex[\,X\,]^c$ where $X$ is a nonnegative random variable and $c \in [0,1]$ (in particular, when $c$ is $\log_2(1+p)$ or $\gamma(p)$).
We write $\N$ for the natural numbers $\{0,1,2,\dots\}$, and for $m \in \N$, we write $[m]$ for $\{1,\dots,m\}$.

\subsection{Functions and restrictions on the hypercube}

{\em Function on the hypercube} refers to any function with domain $\{0,1\}^n$ where $n$ is a positive integer.  A {\em Boolean function} is a function on the hypercube with codomain $\{0,1\}$.  (The parameter $n$ plays no role in most results in this paper, so we suppress its mention whenever possible.)

A {\em restriction} is a partial assignment of Boolean variables $x_1,\dots,x_n$ to values $0$ and $1$; this is formally defined as a function $\rho : \{1,\dots,n\} \to \{0,1,\ast\}$ where $\rho(i)=\ast$ signifies that $x_i$ is left free by $\rho$. 
We denote by $\mr{Stars}(\rho) \subseteq [n]$ the set of free variables under $\rho$.
For a function $f$ on the hypercube $\{0,1\}^n$ and a restriction $\rho$, we denote by $f{\uhr}\rho$ the restricted function on the subcube $\{0,1\}^{\mr{Stars}(\rho)}$ defined in the obvious way: $(f{\uhr}\rho)(y) = f(x)$ where $x \in \{0,1\}^n$ is the input with $x_i = y_i$ if $i \in \mr{Stars}(\rho)$ and $x_i = \rho(i)$ otherwise.

For $p \in [0,1]$, the {\em $p$-random restriction} $\mb R_p$ is the random restriction that independently leaves each variable $x_i$ free with probability $p$ and otherwise sets $x_i$ to $0$ or $1$ with equal probability.  Thus, for any particular restriction $\rho$, we have $\Pr[\ \mb R_p = \rho\ ] = p^{|\mr{Stars}(\rho)|}((1-p)/2)^{n-|\mr{Stars}(\rho)|}$.

\subsection{Complexity measures $\DL,\DT,\DNF,\CNF$ and their width/depth versions}\label{sec:compmeasures} 

\begin{df}[DNF formulas]
We first define literals, conjunctive clauses, and DNF formulas over $n$ variables.
\begin{itemize}
\item
A {\em literal} is a Boolean variable $x_i$ or negated Boolean variable $\BAR{x_i}$ where $i \in \{1,\dots,n\}$.
\item
A {\em conjunctive clause} (a.k.a.\ {\em term}) is an expression $C$ of the form $\ell_1 \wedge \dots \wedge \ell_w$ where $\ell_1,\dots,\ell_w$ are literals on disjoint variables.  The parameter $w$ is the {\em width} of $C$; this may be any nonnegative integer.  The conjunctive clause of width zero is denoted by $\top$.
\item
A {\em DNF formula} is an expression $F$ of the form $C_1 \vee \dots \vee C_m$ where $C_1,\dots,C_m$ are conjunctive clauses. The parameter $m$ is the {\em size} of $F$; this may be any nonnegative integer.  The DNF formula of size $0$ is denoted by $\bot$.  The {\em width} of $F$ is defined as the maximum width of any $C_i$.
\item
{\em CNF formulas} are defined dually (with the roles of $\vee$ and $\wedge$ exchanged).
\end{itemize}
Every literal, conjunctive clause, and DNF formula computes a Boolean function $\{0,1\}^n \to \{0,1\}$ in the usual way.
\begin{itemize}
\item
A DNF formula $F$ is a {\em tautology} if it computes the identically $1$ function. Note that any DNF formula that includes the empty conjunctive clause $\top$ is a tautology.
\end{itemize}
\end{df}

\begin{df}[Decision lists]
\ 
\begin{itemize}
\item
A {\em decision list} is an expression $L$ of the form $((C_1,b_1),\dots,(C_m,b_m))$ where $b_1,\dots,b_m$ are arbitrary output values (not necessarily Boolean) and $C_1,\dots,C_m$ are conjunctive clauses such that $C_1 \vee \dots \vee C_m$ is a tautology. 
The parameter $m$ is the {\em size} of $L$; this may be any positive integer. The {\em width} of $C$ is defined as the maximum width of any $C_i$.
\end{itemize}
A decision list $L$ computes a function $\{0,1\}^n \to \{b_1,\dots,b_m\}$ as follows: on input $x$, the output is $b_\ell$ where $i \in [m]$ is the minimum index such that $C_i(x) = 1$. (Note that the final clause $C_m$ may be replaced by $\top$ without changing the function computed by $L$.)
\end{df}

\begin{df}[Decision trees]
\
\begin{itemize}
\item
A {\em decision tree} is a rooted binary tree $T$ in which each leaf is labeled by an output value (not necessarily Boolean) and each non-leaf node is labeled by a variable $x_i$, with the edges to its two children labeled ``$x_i = 0$'' and ``$x_i = 1$''. The {\em size} of $T$ is the number of leaves; this may be any positive integer. The {\em depth} of $T$ is the maximum number of non-leaf nodes on any root-to-leaf branch; this may be any nonnegative integer.
\end{itemize}
\end{df}

\begin{df}[Associated complexity measures]
For a function $f$ with domain $\{0,1\}^n$ (and arbitrary codomain), let
\[
  \DT(f) &\defeq \tu{minimum size of a decision tree that computes }f,\\
  \DL(f) &\defeq \tu{minimum size of a decision list that computes }f,\\
\intertext{
When $f$ is Boolean, we additionally define
}
  \DNF(f) &\defeq \tu{minimum size of a DNF formula that computes }f,\\
  \CNF(f) &\defeq 
  \tu{minimum size of a CNF formula that computes }f.
\]
For constant functions $\ul 0$ and $\ul 1$, note that $\DNF(\ul 0)=0$ and $\DNF(\ul 1)=1$ according to our definition, since $\ul 0$ is computed by the empty DNF formula, while $\ul 1$ is computed by the DNF formula with a single empty clause. Also note that $\CNF(f) = \DNF(\neg f)$.

Each of the above size measures has a corresponding width/depth measure. These are denoted by
\[
  \DTd(f),\qquad
  \DLw(f),\qquad
  \DNFw(f),\qquad
  \CNFw(f).
\]
\end{df}

\begin{prop}[see \cite{blum1992rank,krause2006computational}]
These size measures satisfy the following inequalities for all Boolean functions:
\[
  1 
  \le 
  \DL 
  \le   
  \left\{\begin{array}{c}
    \DNF + 1\\
    \CNF + 1
  \end{array}\right\}
  \le 
  \DNF + \CNF 
  \le 
  \DT.
\]
The corresponding width/depth measures satisfy:
\[
  0
  \le
  \DLw 
  \le 
  \left\{\begin{array}{c}
    \DNFw\\
    \CNFw\\
    \lceil\,\log_2(\DT)\,\rceil
  \end{array}\right\}
  \le 
  \DTd 
  \le
  {\DNFw \cdot \CNFw}.
\]
The above inequalities that involve decision trees and decision lists also apply to non-Boolean functions on the hypercube.
\end{prop}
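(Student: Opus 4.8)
The plan is to establish every inequality by an explicit syntactic conversion between the models, handling the size measures and the width/depth measures separately; all of these are folklore, so I will aim for the shortest self-contained constructions. For the size chain, the conversions $\DL \le \DNF + 1$ and $\DL \le \CNF + 1$ are immediate: a DNF $C_1 \vee \dots \vee C_m$ for $f$ becomes the decision list $((C_1,1),\dots,(C_m,1),(\top,0))$ of size $m+1$, and the CNF bound follows by applying the same construction to $\neg f$ after observing that $\DL(f) = \DL(\neg f)$ (negating all output bits leaves the clauses unchanged). For $\DNF + \CNF \le \DT$, I would take an optimal decision tree for $f$ with $\DT(f)$ leaves, of which $a$ are labelled $1$ and $b$ are labelled $0$; the $a$ root-to-leaf paths to $1$-leaves are the terms of a DNF of size $a$, and the $b$ paths to $0$-leaves (negated) are the clauses of a CNF of size $b$, so $\DNF(f) + \CNF(f) \le a + b = \DT(f)$. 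The two middle inequalities $\DNF + 1 \le \DNF + \CNF$ and $\CNF + 1 \le \DNF + \CNF$ amount to $\CNF \ge 1$ and $\DNF \ge 1$, which hold for every non-constant $f$; the constant functions $\ul 0, \ul 1$ are treated directly, where the chain collapses to $\DL = \DNF + \CNF = \DT = 1$.

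For the width/depth measures, the bounds $\DLw \le \DNFw$ and $\DNFw \le \DTd$ come from the same DNF-to-decision-list and decision-tree-to-DNF conversions, now tracking widths rather than sizes: the padding clause $\top$ has width $0$, and each $1$-leaf path of a depth-$d$ tree is a term of width $\le d$. The dual statements $\DLw \le \CNFw$ and $\CNFw \le \DTd$ follow by negation. The bound $\DLw \le \lceil \log_2 \DT \rceil$ requires a short argument: in any binary tree with $s$ leaves some leaf sits at depth $\le \lfloor \log_2 s \rfloor$ (otherwise the tree would be fully internal down to depth $\lfloor\log_2 s\rfloor + 1$, forcing more than $s$ leaves). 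I would peel off a shallowest leaf, record its root path as the next decision-list clause, contract its parent into the sibling subtree, and recurse; the inputs that reach the removed leaf are exactly those satisfying the recorded clause, which precedes the remainder of the list, so correctness is preserved, and every clause has width $\le \lfloor \log_2 s \rfloor \le \lceil \log_2 \DT(f)\rceil$.

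The one genuinely substantive inequality, and the main obstacle, is $\DTd \le \DNFw \cdot \CNFw$. Here I would argue via an adaptive querying strategy: writing $a = \DNFw(f)$ and $b = \CNFw(f)$, fix a width-$\le a$ DNF and a width-$\le b$ CNF for $f$. The key observation is that every term of the DNF shares a variable with every clause of the CNF; indeed a term forces $f=1$ while $f$ forces each clause to $1$, so a term and a clause on disjoint variable sets would admit a common input satisfying the term yet falsifying the clause, a contradiction. Querying all $\le b$ variables of the first CNF clause either falsifies it---forcing $f = 0$ and ending that branch at depth $\le b$---or fixes, in every surviving DNF term, a shared literal, thereby killing the term or reducing its width by one. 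On the surviving branch the restricted function has DNF width $\le a - 1$ and CNF width still $\le b$, so induction on $a$ yields depth $\le b + (a-1)b = ab$, with base case $a = 0$ (a constant function, depth $0$). Equivalently, one may route this through certificate complexity, using $C_1(f) \le \DNFw(f)$, $C_0(f) \le \CNFw(f)$, and the classical $\DTd(f) \le C_0(f)\cdot C_1(f)$.

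Finally, the extreme inequalities $1 \le \DL$ and $0 \le \DLw$ are immediate from the definitions ($m \ge 1$ and nonnegativity of width). For the last sentence of the statement, I would observe that the conversions not mentioning $\DNF$ or $\CNF$---namely $\DL \le \DT$ (a decision tree of size $s$ is a decision list of size $s$, since its leaf paths partition the cube), $\DLw \le \lceil \log_2 \DT \rceil$, and $\DLw \le \DTd$---never used Booleanness of the outputs, so they extend verbatim to arbitrary functions on the hypercube.
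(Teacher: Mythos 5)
The paper itself offers no proof of this proposition (it is stated with a ``see [references]'' citation), so there is no in-paper argument to compare against; your proof has to stand on its own. Judged that way, it follows the standard folklore route and nearly all of it is correct: the padding conversion $((C_1,1),\dots,(C_m,1),(\top,0))$ for $\DL \le \DNF+1$ (the same conversion the paper uses inside its proof of Theorem \ref{thm:DLshrinkage}), the negation dualities, the split of a decision tree's leaves into a DNF and a CNF for $\DNF + \CNF \le \DT$, the leaf-peeling argument for $\DLw \le \lceil \log_2 \DT \rceil$ (which in fact yields the stronger floor bound), and the shared-variable induction for $\DTd \le \DNFw \cdot \CNFw$ (complete up to the trivial case of an empty CNF, i.e.\ $f \equiv \ul 1$, which your base case $a=0$ does not quite catch but which is depth $0$ anyway).

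There are, however, two concrete defects. First, a genuine omission: the second chain asserts that \emph{every} entry inside the braces is at most $\DTd$, so it requires $\lceil \log_2 \DT \rceil \le \DTd$, and you never prove this link. You establish $\DLw \le \lceil \log_2 \DT\rceil$, $\DNFw \le \DTd$, and $\CNFw \le \DTd$, but the upper bound on the third brace entry is missing --- both in the Boolean case and in your treatment of the final sentence (it involves only decision trees, so it is also one of the inequalities that must extend to non-Boolean functions). The fix is immediate: a depth-$d$ decision tree has at most $2^d$ leaves, so $\DT \le 2^{\DTd}$, hence $\log_2 \DT \le \DTd$, and integrality of $\DTd$ gives $\lceil \log_2 \DT \rceil \le \DTd$; but as written the chain is not fully established. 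Second, your handling of constant functions glosses over the fact that the statement is literally false there under the paper's conventions: since $\DNF(\ul 0) = 0$ and $\CNF(\ul 1) = 0$, the inequality $\CNF + 1 \le \DNF + \CNF$ reads $2 \le 1$ at $f = \ul 0$, and dually $\DNF + 1 \le \DNF + \CNF$ fails at $f = \ul 1$. Your remark that ``the chain collapses to $\DL = \DNF + \CNF = \DT = 1$'' is true but does not rescue the offending brace entry --- nothing can, since it is false. You correctly identified the needed condition ($\DNF, \CNF \ge 1$, i.e.\ $f$ non-constant); the honest conclusion is that the first chain holds exactly for non-constant $f$ and that constants are a genuine exception to the literal statement, rather than a case that is ``treated directly.''
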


We introduce additional computational models later on: (weakly) orthogonal decision lists in Section \ref{sec:orthogonal} and $\ACzero$ formulas in Section \ref{sec:ac0formulas}.

\section{Shrinkage of decision trees and decision lists}
\label{sec:DLshrinkage}

We prove Theorems \ref{thm:DTshrinkage} and \ref{thm:DLshrinkage} in Sections \ref{sec:DTshrinkage} and \ref{sec:DLshrinkage}.  We then discuss extensions of our shrinkage bound to (weakly) orthogonal decision lists in Section \ref{sec:orthogonal} and tightness of the bounds Section \ref{sec:lb}.

\subsection{Shrinkage of decision trees}
\label{sec:DTshrinkage}

\begin{proof}[Proof of Theorem \ref{thm:DTshrinkage}]
Let $T$ be a decision tree (with arbitrary output values). We must show that
\[
  \Ex[\ \mr{size}(T{\uhr}\mb R_p)\ ] 
  &\le 
  \mr{size}(T)^{\log_2(1+p)}.
\]
We argue by induction of the size of $T$. The inequality is trivial in the base case that $T$ has size $1$.

Assume $T$ has size $m \ge 2$. Then $T$ has the form ``If $x_i = 0$ then $T_0$ else $T_1$'' where $T_0,T_1$ are decision trees of size $m_0,m_1 \ge 1$ with $m_0+m_1 = m$. Without loss of generality, $T_0$ and $T_1$ never query $x_i$. We have
\[
  \Ex[\ \mr{size}(T{\uhr}\mb R_p)\ ] 
  &=
  p\Ex\big[\ \mr{size}(T{\uhr}\mb R_p)\ \big|\ \mb R_p(x_i) = \ast\ \big]
  \\
  &\quad\,+ \frac{1-p}{2}\Big(\Ex\big[\ \mr{size}(T_0{\uhr}\mb R_p)\ \big|\ \mb R_p(x_i) = 0\ \big] + \Ex\big[\ \mr{size}(T_1{\uhr}\mb R_p)\ \big|\ \mb R_p(x_i) = 1\ \big]\Big)\hspace{-1.6in}\\
  &=
  \frac{1+p}{2}\Big(
  \Ex[\ \mr{size}(T_0{\uhr}\mb R_p)\ ]
  + \Ex[\ \mr{size}(T_1{\uhr}\mb R_p)\ ]\Big)\hspace{-1in}\\
  &\le
  \frac{1+p}{2}\Big((m_0)^{\log_2(1+p)} + (m_1)^{\log_2(1+p)}\Big)
  &&\text{(induction hypothesis)}\\
  &\le
  (1+p) \Big(\frac{m}{2}\Big)^{\log_2(1+p)}
  &&\text{(Jensen's inequality)}\\
  &=
  m^{\log_2(1+p)}.&&
\]

As for tightness of the bound: If $f$ is a parity function $f(x_1,\dots,x_k) = x_1 \oplus \dots \oplus x_k$, then we have $\DT(f) = 2^k$ and
\[  
  \Ex[\ \DT(f{\uhr}\mb R_p)\ ]
  =
  \Ex[\ 2^{\Bin(k,p)}\ ]
  &=
  \sum_{i=0}^k 2^i \Pr[\ \Bin(k,p) = i\ ]\\
  &=
  \sum_{i=0}^k \binom{k}{i} (2p)^i (1-p)^{k-i}
  =
  (1+p)^k
  =
  \DT(f)^{\log_2(1+p)}.\qedhere
\]
\end{proof}

\subsection{Shrinkage of decision lists}\label{sec:DLshrinkageproof}

We now prove our main result on the shrinkage of decision lists and DNF formulas. 

\begin{proof}[Proof of Theorem \ref{thm:DLshrinkage}]
Let $f$ be any function on the hypercube and let $p \in [0,1]$.  
(Note: Neither the hypercube dimension $n$ nor the nature of output values of $f$ play no role in our analysis.)

Let $L = ((C_1,b_1),\dots,(C_m,b_m))$ be a decision list of minimum size that computes $f$, that is, with $m = \DL(f)$.
For $\ell \in [m]$, let $|C_\ell|$ denote the width of the clause $C_\ell$ (i.e.,\ the number of literals in $C_\ell$). 
Without loss of generality, we have $|C_1|,\dots,|C_{m-1}| \ge 1$ and $|C_m|=0$ (i.e.,\ $C_m$ is the empty clause $\top$).

Following Lovett, Wu and Zhang \cite{lovett2020decision}, for a restriction $\rho$, we define the set $\U{L}{\rho} \subseteq [m]$ of {\em useful indices of $L$ under $\rho$} by
\[
  \U{L}{\rho} &\defeq \{\ell \in [m] : \exists \text{ an input $x$ consistent with $\rho$ such that $C_\ell(x) = 1$ and } C_1(x)=\dots=C_{\ell-1}(x) = 0\}.
\]
If $\U{L}{\rho} = \{\ell_1,\dots,\ell_t\}$ where $1 \le \ell_1 < \dots < \ell_t \le m$, then the restricted function $f{\uhr}\rho$ is computed by the decision list $L{\uhr}\rho$ defined by
\[
L{\uhr}\rho \defeq ((C_{\ell_1}{\uhr}\rho,b_{\ell_1}),\dots,(C_{\ell_t}{\uhr}\rho,b_{\ell_t}))
\]
where $C_{\ell_i}{\uhr}\rho$ is the sub-clause of $C_{\ell_i}$ on the variables left unrestricted by $\rho$.
(Note that $C_{\ell_1} \vee \dots \vee C_{\ell_t}$ is a tautology, so $L{\uhr}\rho$ is indeed a decision list.)
Thus, we have
\begin{equation}\label{eq:useful}
  \DL(f{\uhr}\rho) \le |\U{L}{\rho}|.
\end{equation}

\begin{mdframed}[style=MyQuoteFrame]
For example, suppose $m = 4$ and 
\[  
  C_1 = x_1 \wedge x_3,\quad 
  C_2 = \BAR{x_1} \wedge x_4,\quad 
  C_3 = 
  x_2 \wedge \BAR{x_3},\quad 
  C_4 = \top.
\]
For $\rho_1 \defeq \{x_1 \mapsto 1\}$ (the restriction fixing $x_1$ to $1$ and leaving other variables free), we have 
\[
  \U{L}{\rho_1} = \{1,3,4\},\qquad L{\uhr}\rho_1 = ((x_3,b_1),(x_2 \wedge \BAR {x_3},b_3),(\top,b_4)).
\]
For $\rho_2 \defeq \{x_1 \mapsto 1,\, x_2 \mapsto 1\}$, 
we have 
\[
  \U{L}{\rho_2} = \{1,3\},\qquad L{\uhr}\rho_2 = ((x_3,b_1),(\BAR{x_3},b_3)). 
\]
In particular, the final clause $C_4$ is not useful under $\rho_2$ (since any input consistent with $\rho_2$ satisfies $C_1$ or $C_3$).
\end{mdframed}

Now comes a key definition: let $\mu = (\mu_1,\dots,\mu_m)$ be the probability density vector (defining a probability distribution on $[m]$) 
\[
  \mu_\ell \defeq \mbox{}&\Pr_{\rho \sim \mb R_p}[\ \max(\U{L}{\rho}) = \ell \text{ and } C_\ell{\uhr}\rho \equiv 1\ ]
  \quad\text{ for } \ell \in [m-1],\\
  \mu_m \defeq \mbox{}&\Pr_{\rho \sim \mb R_p}[\ \max(\U{L}{\rho}) = m \text{ or }C_{\max(\U{L}{\rho})}{\uhr}\rho \not\equiv 1\ ].
\]
Since events $\max(\U{L}{\rho}) = \ell$ are mutually exclusive, clearly we have $\mu_1 + \dots + \mu_m = 1$. 

\begin{mdframed}[style=MyQuoteFrame] 
Note that $\max(\U{L}{\rho}) = \ell$ does not imply $C_\ell{\uhr}\rho \equiv 1$, that is, $\mu_\ell$ does not necessarily equal $\Pr_{\rho \sim \mb R_p}[\ \max(\U{L}{\rho}) = \ell\ ]$. 
This is illustrated by the restriction $\rho_2$ in the above example, for which we have $\max(\U{L}{\rho_2}) = 3$, yet $C_3{\uhr}\rho_2 = \BAR x_3 \not\equiv 1$. Restrictions $\rho_1$ and $\rho_2$ both contribute to probability mass $\mu_4$: in the case of $\rho_1$, this is because $\max(\U{L}{\rho_1}) = 4$, and in the case of $\rho_2$, this is because $C_{\max(\U{L}{\rho_2})}{\uhr}\rho_2 \not\equiv 1$.
\end{mdframed}

For each $\ell \in [m]$, we have
$\mu_\ell \le \Pr[\ C_\ell{\uhr}\rho \equiv 1\ ] = ((1-p)/2)^{|C_\ell|}$
and therefore 
\begin{equation}\label{eq:w}
  |C_\ell| \le \ts\log_{2/(1-p)}(1/\mu_\ell).
\end{equation}

We require one more definition. For a restriction $\rho$ and a useful index $\ell \in \U{L}{\rho}$, let $\rho^{(\ell)}$ be the restriction obtained by augmenting $\rho$ by the unique satisfying assignment for the clause $C_\ell$.  That is, $\rho^{(\ell)}$ fixes a variable $x_i$ to $a \in \{0,1\}$ if, and only if, $\rho$ fixes $x_i$ to $a$ or $x_i=a$ in the satisfying assignment to $C_\ell$.

As in proofs of the Switching Lemma, we will use the fact that
\begin{equation}\label{eq:stars}
  \frac{\Pr[\ \mb R_p = \rho\ ]}{\Pr[\ \mb R_p = \rho^{(\ell)}\ ]}
  =
  \left(\frac{2p}{1-p}\right)^{|\mr{Stars}(\rho) \cap \mr{Vars}(C_\ell)|}
\end{equation}
since $\rho^{(\ell)}$ has exactly $|\mr{Stars}(\rho) \cap \mr{Vars}(C_\ell)|$ fewer unrestricted variables (``stars'') than $\rho$.

As observed in \cite{lovett2020decision}, for every $\ell \in \U{L}{\rho}$, we have $\U{L}{\rho^{(\ell)}} = \U{L}{\rho} \cap [\ell]$ and therefore
\begin{equation}\label{eq:ell}
  \max(\U{L}{\rho^{(\ell)}}) = \ell 
  \quad\text{ and }\quad
  C_\ell{\uhr}\rho^{(\ell)} \equiv 1.
\end{equation}
Thus, $\rho^{(\ell)}$ contributes to the probability mass $\mu_\ell$. 

As a consequence of (\ref{eq:stars}) and (\ref{eq:ell}), we claim that for all $\ell \in [m]$,
\begin{equation}\label{eq:ellmu}
  \Pr_{\rho \sim \mb R_p}[\ \ell \in \U{L}{\rho}\ ]
  \le
  \mu_\ell \left(\frac{1+p}{1-p}\right)^{|C_\ell|} .
\end{equation}
In the case $\ell = m$, this follows from $m \in \U{L}{\rho} \Rightarrow \max(\U{L}{\rho}) = m$. For $\ell \in [m-1]$, this is shown as follows:
\[
  \Pr_{\rho \sim \mb R_p}[\ \ell \in \U{L}{\rho}\ ]
  &=
  \sum_{S \subseteq \mr{Vars}(C_\ell)}\, 
  \Pr_{\rho \sim \mb R_p}[\ \ell \in \U{L}{\rho} \text{ and } \mr{Stars}(\rho) \cap \mr{Vars}(C_\ell) = S\ ]\\  
  &\stackrel{(\ref{eq:ell})}\le
  \sum_{S \subseteq \mr{Vars}(C_\ell)}\, 
  \Pr_{\rho \sim \mb R_p}[\ \ell = \max(\U{L}{\rho^{(\ell)}}) \text{ and } C_\ell{\uhr}\rho^{(\ell)} \equiv 1 \text{ and } \mr{Stars}(\rho) \cap \mr{Vars}(C_\ell) = S\ ]\\ 
  &=
  \sum_{S \subseteq \mr{Vars}(C_\ell)}\  
  \sum_{\rho \,:\, \ell = \max(\U{L}{\rho^{(\ell)}}) \text{ and } C_\ell{\uhr}\rho^{(\ell)} \equiv 1 \text{ and } \mr{Stars}(\rho)  \cap \mr{Vars}(C_\ell) = S}\, 
  \Pr[\ \mb R_p = \rho\ ]\\ 
  &=
  \sum_{S \subseteq \mr{Vars}(C_\ell)}\ 
  \sum_{\sigma \,:\, \ell = \max(\U{L}{\sigma}) \text{ and } C_\ell{\uhr}\sigma \equiv 1}\ 
  \sum_{\rho \,:\, \rho^{(\ell)} = \sigma \text{ and } \mr{Stars}(\rho) \cap \mr{Vars}(C_\ell) = S}\, 
  \Pr[\ \mb R_p = \rho\ ]\\ 
  &\stackrel{(\ref{eq:stars})}=
  \sum_{S \subseteq \mr{Vars}(C_\ell)}\ 
  \sum_{\sigma \,:\, \ell = \max(\U{L}{\sigma}) \text{ and } C_\ell{\uhr}\sigma \equiv 1} \ 
  \sum_{\rho \,:\, \rho^{(\ell)} = \sigma \text{ and } \mr{Stars}(\rho) \cap \mr{Vars}(C_\ell) = S}\, 
  \left(\frac{2p}{1-p}\right)^{|S|} 
  \Pr[\ \mb R_p = \sigma\ ]\\ 
  &=
  \sum_{S \subseteq \mr{Vars}(C_\ell)}\,
  \left(\frac{2p}{1-p}\right)^{|S|}
  \sum_{\sigma \,:\, \ell = \max(\U{L}{\sigma}) \text{ and } C_\ell{\uhr}\sigma \equiv 1}
  \Pr[\ \mb R_p = \sigma\ ]
  \quad\text{ ($\rho$ is determined by $\sigma$ and $S$)}\\ 
  &=
  \mu_\ell \sum_{S \subseteq \mr{Vars}(C_\ell)}\,
  \left(\frac{2p}{1-p}\right)^{|S|} 
  \quad\text{ (definition of $\mu_\ell$)}\\
  &=
  \mu_\ell\left(\frac{1+p}{1-p}\right)^{|C_\ell|} 
  \quad\text{ (binomial expansion of $(1+\tsfrac{2p}{1-p})^{|C_\ell|}$)}.
\]

Finally, we obtain the shrinkage bound of Theorem \ref{thm:DLshrinkage} by the following calculation, which uses Jensen's inequality in addition to the above observations:
\[
  \Ex_{\rho \sim \mb R_p}[\ \DL(f{\uhr}\rho) \ ] 
  \stackrel{(\ref{eq:useful})}\le
  \Ex_{\rho \sim \mb R_p}[\ |\U{L}{\rho}|\ ]
  &= 
  \sum_{\ell\in[m]}\,  
  \Pr_{\rho \sim \mb R_p}[\ \ell \in \U{L}{\rho} \ ]\\
  &\stackrel{(\ref{eq:ellmu})}{=}
  \sum_{\ell\in[m]}\,  
  \mu_\ell\left(\frac{1+p}{1-p}\right)^{|C_\ell|} \\
  &\stackrel{(\ref{eq:w})}\le
  \sum_{\ell \in [m]} \mu_\ell \left(\frac{1+p}{1-p}\right)^{\log_{2/(1-p)}(1/\mu_\ell)}\\
  &=
  \Ex_{\ell \sim \mu}\bigg[\ \bigg(\frac{1}{\mu_\ell}\bigg)^{\gamma(p)}\ \bigg]
  \qquad\,\text{ (definition of $\gamma(p) = \log_{\frac{2}{1-p}}(\tsfrac{1+p}{1-p})$)}\\
  &\le
  \bigg(\Ex_{\ell \sim \mu}\bigg[\ \frac{1}{\mu_\ell}\ \bigg]\bigg)^{\gamma(p)} 
  \qquad\text{ (Jensen's inequality)}\\
  &=
  m^{\gamma(p)}.\vphantom{\Big|}
\]
Since $m = \DL(f)$, this complete the proof of our bound on decision list shrinkage.\bigskip

We shall now assume that $f$ is Boolean and $C_1 \vee \dots \vee C_m$ is a minimum size DNF formula computing $f$.  Let $L$ be the equivalent decision list $((C_1,1),\dots,(C_m,1),(\top,0))$ of size $m+1$.  The shrinkage bound
\[
  \Ex[\ \DNF(f{\uhr}\mb R_p) + 1\ ] \le (\DNF(f) + 1)^{\gamma(p)}
\]
now follows from the above analysis, noting that $\DNF(f{\uhr}\rho) + 1 \le \mr{size}(L{\uhr}\rho)$ for all restrictions $\rho$.
\end{proof}

\subsection{Shrinkage of (weakly) orthogonal decision lists}\label{sec:orthogonal}

\begin{df}
Let $L = ((C_1,b_1),\dots,(C_m,b_m))$ be a decision list. We say that $L$ is
\begin{itemize}
\item
{\em orthogonal} if each input $x$ satisfies exactly one of the conjunctive clauses $C_1,\dots,C_m$,  
\item
{\em weakly orthogonal} if 
each input $x$ satisfies at most one of $C_1,\dots,C_{m-1}$.
\end{itemize}
(Note that if $L$ is weakly orthogonal, then it remains so after replacing $C_m$ with $\top$. In contrast, an orthogonal decision list has $C_m = \top$ if and only if $m=1$.)

For a function $f$ on the hypercube, we denote by $\mathsf{(w)ODL}(f)$ the minimum size of a (weakly) orthogonal decision list that computes $f$.
These complexity measures lies in-between $\DL$ and $\DT$: 
\[
  \DL \le \mathsf{wODL} \le \mathsf{ODL} \le \DT.
\]
\end{df}

Our proof of Theorem \ref{thm:DLshrinkage} implies a shrinkage bound for $\mathsf{ODL}$ and $\mathsf{wODL}$ in the same way as for $\mathsf{DNF}+1$.  

\begin{cor}\label{cor:orthogonal}
For every function $f$ on the hypercube,
\[
  \Ex[\ \mathsf{ODL}(f{\uhr}\mb R_p)\ ] \le \mathsf{ODL}(f)^{\gamma(p)}
  \quad\text{ and }\quad
  \Ex[\ \mathsf{wODL}(f{\uhr}\mb R_p)\ ] \le \mathsf{wODL}(f)^{\gamma(p)}.
\]
\end{cor}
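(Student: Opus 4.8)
The plan is to follow the template used for $\mathsf{DNF}+1$ at the close of the proof of Theorem \ref{thm:DLshrinkage}. Fix a minimum-size (weakly) orthogonal decision list $L = ((C_1,b_1),\dots,(C_m,b_m))$ computing $f$, with $m = \mathsf{ODL}(f)$ (resp.\ $m = \mathsf{wODL}(f)$). Everything reduces to two claims: (i) the restricted list $L{\uhr}\rho$ formed from the useful indices is again (weakly) orthogonal, so that $\mathsf{ODL}(f{\uhr}\rho) \le |\U{L}{\rho}|$ (resp.\ $\mathsf{wODL}(f{\uhr}\rho) \le |\U{L}{\rho}|$); and (ii) $\Ex_{\rho \sim \mb R_p}[\,|\U{L}{\rho}|\,] \le m^{\gamma(p)}$. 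Granting both, the corollary is immediate.

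For (i) I would argue pointwise. Fix $\rho$ with $\U{L}{\rho} = \{\ell_1 < \dots < \ell_t\}$, and let $y$ range over the restricted cube, corresponding to a full input $x$ extending $\rho$; then $C_{\ell_i}{\uhr}\rho(y) = C_{\ell_i}(x)$ for every $i$. In the orthogonal case, $x$ satisfies exactly one of $C_1,\dots,C_m$, and that clause is useful (being the first, hence only, satisfied clause), so exactly one of $C_{\ell_1}{\uhr}\rho,\dots,C_{\ell_t}{\uhr}\rho$ fires at $y$; thus $L{\uhr}\rho$ is orthogonal. In the weakly orthogonal case, observe that $\ell_1,\dots,\ell_{t-1} \le \ell_t - 1 \le m-1$, so these indices all point into the first $m-1$ clauses of $L$; since $x$ satisfies at most one of $C_1,\dots,C_{m-1}$, it satisfies at most one of $C_{\ell_1},\dots,C_{\ell_{t-1}}$, which is precisely weak orthogonality of $L{\uhr}\rho$ (the last clause $C_{\ell_t}{\uhr}\rho$ being exempt).

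For (ii) in the weakly orthogonal case I would reuse the proof of Theorem \ref{thm:DLshrinkage} verbatim. By the remark that a weakly orthogonal list stays weakly orthogonal after replacing $C_m$ with $\top$, I may assume $|C_m| = 0$, and this is the only feature of $L$ beyond its being a decision list that the $\mu$-argument uses (it is what makes the instance of (\ref{eq:w}) at $\ell = m$ trivial). Hence $\Ex[\,|\U{L}{\rho}|\,] \le m^{\gamma(p)}$ with no changes.

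The main obstacle is (ii) in the orthogonal case: an orthogonal list with $m \ge 2$ necessarily has $C_m \ne \top$, and then $\mu_m \le ((1-p)/2)^{|C_m|}$ can fail, so the $\mu$-machinery does not transfer. I would instead exploit orthogonality directly. Since exactly one clause fires on each input, $\ell \in \U{L}{\rho}$ holds iff $\rho$ is merely consistent with $C_\ell$, giving $\Pr_{\rho \sim \mb R_p}[\,\ell \in \U{L}{\rho}\,] = ((1+p)/2)^{|C_\ell|}$; and orthogonality also yields the Kraft identity $\sum_{\ell} 2^{-|C_\ell|} = 1$. Writing $q_\ell = 2^{-|C_\ell|}$, one checks $((1+p)/2)^{|C_\ell|} = q_\ell^{\,1 - \log_2(1+p)}$, so $\Ex[\,|\U{L}{\rho}|\,] = \sum_\ell q_\ell^{\,1-\log_2(1+p)}$; as $\sum_\ell q_\ell = 1$ over $m$ terms and $1 - \log_2(1+p) \in [0,1]$, concavity (Jensen) gives $\Ex[\,|\U{L}{\rho}|\,] \le m^{\log_2(1+p)} \le m^{\gamma(p)}$, using $\log_2(1+p) \le \gamma(p)$. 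In fact this route delivers the stronger decision-tree-type exponent $\log_2(1+p)$ for orthogonal lists, consistent with $\mathsf{ODL}$ sitting just above $\DT$.
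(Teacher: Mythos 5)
Your proof is correct, and while the $\wODL$ half retraces the paper, the $\ODL$ half takes a genuinely different route that yields a stronger conclusion. For $\wODL$ you do exactly what the paper intends: replace $C_m$ by $\top$ (preserving weak orthogonality and the computed function), observe that the useful-index list $L{\uhr}\rho$ is again weakly orthogonal, and run the $\mu$-argument of Theorem \ref{thm:DLshrinkage} verbatim. For $\ODL$, you correctly identify the obstruction that the paper's one-sentence justification glosses over: the $\mu$-argument needs $|C_m|=0$ for inequality (\ref{eq:w}) at $\ell=m$, whereas an orthogonal list with $m\ge 2$ has $|C_m|\ge 1$, and (\ref{eq:w}) can then genuinely fail --- e.g.\ for $L=((x_1,b_1),(\BAR{x_1},b_2))$ one gets $\mu_2=\tsfrac{1+p}{2}>\tsfrac{1-p}{2}$. (The intended repair of the paper's argument is that, by orthogonality, an input falsifies $C_1,\dots,C_{m-1}$ if and only if it satisfies $C_m$, so swapping $C_m$ for $\top$ changes neither the computed function nor the set $\U{L}{\rho}$; one then runs the $\mu$-argument on the swapped list while invoking orthogonality of $L{\uhr}\rho$ for the bound $\ODL(f{\uhr}\rho)\le|\U{L}{\rho}|$.) Your alternative sidesteps that machinery entirely: orthogonality makes usefulness of $\ell$ equivalent to consistency of $\rho$ with $C_\ell$, which gives the exact value $\Pr[\,\ell\in\U{L}{\rho}\,]=(\tsfrac{1+p}{2})^{|C_\ell|}$, and the Kraft identity $\sum_\ell 2^{-|C_\ell|}=1$ together with concavity of $t\mapsto t^{\,1-\log_2(1+p)}$ gives $\Ex[\,|\U{L}{\rho}|\,]\le m^{\log_2(1+p)}$. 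This buys something the paper's uniform treatment cannot: the exponent $\log_2(1+p)$ in place of $\gamma(p)$, which by the read-once-tree example of Section \ref{sec:lb} is optimal for $\ODL$. So your argument pins down the orthogonal case exactly, at the modest cost of handling $\ODL$ and $\wODL$ by separate arguments rather than one.
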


This follows from the observation that if $L$ is orthogonal, then so is $L{\uhr}\rho$ for any restriction $\rho$, and if $L$ is semi-orthogonal, then $L{\uhr}\rho$ is semi-orthogonal after replacing the final conjunctive clause with $\top$.

\subsection{Lower bound on the optimal $\gamma(p)$}
\label{sec:lb}

What is the optimal function $\gamma(p)$ that may be chosen
in the bound on decision list shrinkage of Theorem \ref{thm:DLshrinkage}? 
We observe that $\gamma(p)$ cannot be improved beyond $\log_2(1+p)$.
The lower bound is given by a (non-Boolean) function $f$ computed by a read-once decision tree of depth $k$ and size $2^k$, in which each internal node queries a distinct variable and each leaf returns a distinct output value.  
For this $f$, we have $\DL(f) = 2^k$ and $\Ex[\ \DL(f{\uhr}\mb R_p)\ ] = (1+p)^k = \DL(f)^{\log_2(1+p)}$.
The same function also shows that $\gamma(p)$ in Corollary \ref{cor:orthogonal} cannot improved beyond $\log_2(1+p)$.
Since this function is not Boolean, it does not imply a lower bound on DNF shrinkage; however, a similar bound can be shown asymptotically by considering parity functions.

\section{Shrinkage of $\ACzero$ formulas}\label{sec:Linfty}\label{sec:ac0formulas}

Our bound the shrinkage DNF and CNF formulas implies an (only slightly weaker) bound on the shrinkage of depth-$2$ formula leaf-size.  We also discuss the relationship between leaf-size and a related size measure on $\ACzero$ formulas, the number of depth-$1$ gates.

\begin{df}
An {\em AC$^{\hspace{1pt}\text 0}$ formula} is a formula composed unbounded fan-in \AND{} and \OR{} gates with inputs labeled by literals.  We measure {\em depth} by the maximum number of gates on an input-to-output path; the expression ``depth-$d$ formula'' refers to an $\ACzero$ formula of depth at most $d$.
As with DeMorgan formulas, the {\em leaf-size} of an $\ACzero$ formula is the number of leaves labeled by literals.  An alternative size measure is the number of depth-$1$ gates (that have only literals as inputs). This number is at least half the total number of gates in any formula with no (useless) gates of fan-in $1$.

For a Boolean function $f$ and $d \ge 2$, we denote by $\mc L_d(f)$ the minimum leaf-size of depth-$d$ formula that computes $f$, and we denote by $\mc F_d(f)$ the minimum number of depth-$1$ gates in a depth-$d$ formula that computes $f$.  Note that $\mc L_d(f) = 1$ iff $f$ is a literal, and $\mc F_d(f) = 1$ iff $f$ is a nonempty conjunctive or disjunctive clause, and $\mc L_d(f) = \mc F_d(f) = 0$ iff $f$ is constant (hence computed by a single \AND{} or \OR{} gate with fan-in zero, which as a formula has no inputs and no depth-$1$ gates).

Finally, we denote by $\mc F(f)$ the minimum number of depth-$1$ gates in an (unbounded depth, unbounded fan-in) formula that computes $f$.
\end{df}

Note that $\mc F_2 = \min\{\DNF,\,\CNF\}$. Theorem \ref{thm:DLshrinkage} therefore implies:

\begin{cor}\label{cor:F2}
For all Boolean functions $f$,
\[
  \Ex[\ \mc F_2(f{\uhr}\mb R_p) + 1\ ] \le (\mc F_2(f)+1)^{\gamma(p)}.
\]
\end{cor}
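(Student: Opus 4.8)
The plan is to derive this directly from the DNF (and, dually, CNF) shrinkage bound already established in Theorem \ref{thm:DLshrinkage}, exploiting the identity $\mc F_2 = \min\{\DNF,\CNF\}$ noted just above the corollary. The key observation is that this identity holds not only for $f$ but for every restricted function $f{\uhr}\rho$, so that the pointwise inequalities $\mc F_2(f{\uhr}\rho)+1 \le \DNF(f{\uhr}\rho)+1$ and $\mc F_2(f{\uhr}\rho)+1 \le \CNF(f{\uhr}\rho)+1$ hold over all restrictions $\rho$. Taking expectations then lets us transfer either the DNF or the CNF shrinkage bound of Theorem \ref{thm:DLshrinkage} to $\mc F_2$.

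First I would reduce to a single case by duality. Since $\mc F_2(f) = \min\{\DNF(f),\CNF(f)\}$ and $\CNF(f) = \DNF(\neg f)$, we may assume without loss of generality that $\mc F_2(f) = \DNF(f)$; the remaining case follows by applying the argument to $\neg f$, using that $(\neg f){\uhr}\rho = \neg(f{\uhr}\rho)$ and that $\mc F_2$ is invariant under negation (as $\DNF(\neg g)=\CNF(g)$ swaps the two terms of the minimum). Under this assumption, taking expectations of the pointwise bound $\mc F_2(f{\uhr}\rho)+1 \le \DNF(f{\uhr}\rho)+1$ and invoking the DNF half of Theorem \ref{thm:DLshrinkage} yields
\[
  \Ex[\ \mc F_2(f{\uhr}\mb R_p)+1\ ] \le \Ex[\ \DNF(f{\uhr}\mb R_p)+1\ ] \le (\DNF(f)+1)^{\gamma(p)} = (\mc F_2(f)+1)^{\gamma(p)},
\]
which is the desired inequality.

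There is essentially no obstacle here: the corollary is an immediate consequence of Theorem \ref{thm:DLshrinkage} once the pointwise domination of $\mc F_2$ by both $\DNF$ and $\CNF$ under restriction is recorded. If one prefers to avoid the duality reduction, one can instead apply both halves of Theorem \ref{thm:DLshrinkage} and take the minimum of the two resulting upper bounds $(\DNF(f)+1)^{\gamma(p)}$ and $(\CNF(f)+1)^{\gamma(p)}$; since $\gamma(p) \ge 0$, the map $x \mapsto x^{\gamma(p)}$ is nondecreasing on $[0,\infty)$, so the minimum of the right-hand sides equals $(\min\{\DNF(f),\CNF(f)\}+1)^{\gamma(p)} = (\mc F_2(f)+1)^{\gamma(p)}$, recovering the same bound.
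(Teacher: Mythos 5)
Your proof is correct and is exactly the argument the paper intends: the paper derives Corollary \ref{cor:F2} in one line from the identity $\mc F_2 = \min\{\DNF,\CNF\}$ together with the DNF/CNF halves of Theorem \ref{thm:DLshrinkage}, which is precisely what you do (your duality reduction and your ``take the minimum of both bounds'' variant are just two ways of spelling out the same one-liner). No gaps; the pointwise domination $\mc F_2(f{\uhr}\rho)+1 \le \DNF(f{\uhr}\rho)+1$ and monotonicity of expectation are all that is needed.
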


Over $n$-variable Boolean functions, clearly $\mc F_d \le \mc L_d \le n\cdot\mc F_d$ and $\mc F \le \mc L \le n\cdot \mc F$.
The next lemma shows that, under a $1/2$-random restriction, $\mc F_d$ shrinks below $\mc L_d$ and $\mc F$ shrinks below $\mc L$ (independent of $n$).

\begin{la}\label{la:Linfty}
For all Boolean functions $f$ and $d \ge 2$, 
\[
\Ex[\ \mc L_d(f{\uhr}\mb R_{1/2})\ ] \le \mc F_d(f)
\quad\text{ and }\quad 
\Ex[\ \mc L(f{\uhr}\mb R_{1/2})\ ] \le \mc F(f).
\]
\end{la}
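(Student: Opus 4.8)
The plan is to fix an optimal formula for $f$ and to bound the leaf-size of its restriction by charging surviving leaves to depth-1 gates, showing that under $\mb R_{1/2}$ each depth-1 gate keeps at most one leaf in expectation. Let $\Phi$ be a depth-$d$ formula computing $f$ with exactly $\mc F_d(f)$ depth-1 gates (for the second inequality, take an unbounded-depth formula with $\mc F(f)$ depth-1 gates). Restricting a formula never increases its depth and $\Phi\uhr\rho$ computes $f\uhr\rho$, so $\mc L_d(f\uhr\rho)$ is at most the leaf-size of $\Phi\uhr\rho$ for every $\rho$ (and likewise $\mc L\le{}$ leaf-size of $\Phi\uhr\rho$ in the unbounded-depth case). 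It therefore suffices to bound the expected leaf-size of $\Phi\uhr\mb R_{1/2}$. I will take $\Phi$ in leveled form, so that every leaf is an input to a unique depth-1 gate; this is the normalization under which $\mc F_2=\min\{\DNF,\CNF\}$, and it lets me partition the leaves of $\Phi$ among its depth-1 gates.

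The charging step: a leaf $\ell$ can survive in $\Phi\uhr\rho$ only if $\rho$ leaves its variable free and the parent depth-1 gate $g$ of $\ell$ does not simplify to a constant --- for once $g$ becomes constant, its whole subformula (and hence $\ell$) is deleted by constant propagation. Hence the leaf-size of $\Phi\uhr\rho$ is at most the number of leaves $\ell$ with $\ell$ free and $g(\ell)$ nonconstant, and by linearity of expectation I may treat each depth-1 gate separately.

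For the per-gate bound, consider a depth-1 OR gate $g$ of width $w$ (the AND case is dual). Under $\mb R_{1/2}$ each literal of $g$ is free with probability $\tfrac12$ and is forced to $1$ (resp.\ $0$) with probability $\tfrac14$ each, and $g$ collapses to the constant $1$ exactly when some literal is forced to $1$. A fixed leaf $\ell$ of $g$ thus stays free while keeping $g$ nonconstant with probability $\tfrac12(\tfrac34)^{w-1}$, so the expected number of surviving leaves of $g$ is $w\cdot\tfrac12(\tfrac34)^{w-1}$. A one-line calculation (a ratio test in $w$) shows this is maximized at $w=3$ with value $\tfrac{27}{32}<1$, hence is at most $1$ for every $w\ge1$. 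Summing over the $\mc F_d(f)$ depth-1 gates gives $\Ex[\text{leaf-size of }\Phi\uhr\mb R_{1/2}]\le\mc F_d(f)$, and the identical argument gives the $\mc F(f)$ bound.

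The step I expect to require the most care is the reduction to the per-gate estimate, specifically the insistence on a leveled formula so that every leaf is attached to a depth-1 gate whose collapse actually deletes that leaf. A literal feeding a higher gate directly would escape the charging, and in fact for non-leveled formulas the expected leaf-size can exceed the number of depth-1 gates; so I must argue that an optimal formula for $\mc F_d(f)$ (resp.\ $\mc F(f)$) may be taken leveled without changing the depth-1 gate count, matching the convention behind $\mc F_2=\min\{\DNF,\CNF\}$. The choice $p=\tfrac12$ is also essential rather than incidental: the per-gate contribution $w\,p\,(\tfrac{1+p}{2})^{w-1}$ exceeds $1$ for larger $p$, so the clean ``at most one leaf per gate'' bound is special to $\mb R_{1/2}$.
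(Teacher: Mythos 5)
Your proof is correct and is essentially the paper's own: fix a formula achieving $\mc F_d(f)$ (resp.\ $\mc F(f)$), apply linearity of expectation over its depth-$1$ gates, and bound each width-$w$ gate's expected number of surviving leaves under $\mb R_{1/2}$ by $\frac{w}{2}\left(\frac{3}{4}\right)^{w-1} \le \frac{27}{32} < 1$, exactly as in the paper. The leveling point you flag is genuine but needs no separate argument: it is built into the definition of $\mc F_d$ via the convention $\mc F_2 = \min\{\DNF,\CNF\}$ (a bare literal feeding a higher gate counts as a width-$1$ depth-$1$ gate), and in fact it could not be proved from the stricter gate-count definition, since $f = x_1 \vee (x_2 \wedge x_3)$ would then have $\mc F_2(f) = 1$ while $\Ex[\ \mc L_2(f{\uhr}\mb R_{1/2})\ ] = \frac{33}{32} > 1$.
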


\begin{proof}
Let $F$ be a [depth-$d$] $\ACzero$ formula that computes $f$ using the minimum number of depth-$1$ gates.  By linearity of expectation, it suffices to show that each depth-$1$ subformula of $F$ (i.e.,\ conjunctive or disjunctive clause) has expected leaf-size at most $1$ under $\mb R_{1/2}$. 
Indeed, for any $k \ge 1$ and $p \in [0,1]$, 
\[
  \Ex[\ \mc L(\mr{\AND{}}_k{\uhr}\mb R_p) \ ] 
  =
  \Ex[\ \mc L(\mr{\OR{}}_k{\uhr}\mb R_p) \ ]
  &=
  \sum_{j=0}^k
  j \binom{k}{j} p^j \left(\frac{1-p}{2}\right)^{k-j}
  =
  kp\left(\frac{1-p}{2}\right)^{k-1}.
\]
When $p=\frac12$, we have $\frac{k}{2} \left(\frac34\right){}^{k-1} < 1$ for all $k \ge 1$.
\end{proof}

Using Lemma \ref{la:Linfty}, we obtain the following bound on the shrinkage of depth-2 formula leaf-size $\mc L_2$, which has a slightly worse exponent $\gamma(2p)$ compared to $\gamma(p)$ for $\mc F_2$ in Corollary \ref{cor:F2}.

\begin{cor}[Shrinkage of depth-2 formula leaf-size]\label{cor:depth2}
For all Boolean functions $f$,
\[
  \Ex[\ \mc L_2(f{\uhr}\mb R_p)+1\ ] \le (\mc L_2(f)+1)^{\gamma(2p)}.
\]
\end{cor}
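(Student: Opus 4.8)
The plan is to factor the $p$-random restriction through a $\tfrac12$-random restriction, so that Lemma~\ref{la:Linfty} converts leaf-size into depth-$1$-gate count, after which Corollary~\ref{cor:F2} supplies the shrinkage bound for $\mc F_2$. The key ingredient is the standard composition identity for random restrictions: for $p \le \tfrac12$, sampling $\mb R_p$ is equivalent in distribution to first sampling $\sigma \sim \mb R_{2p}$ and then independently applying a $\tfrac12$-random restriction $\tau \sim \mb R_{1/2}$ to the variables left free by $\sigma$. This is verified by a one-line computation at each coordinate: a variable survives as free with probability $2p \cdot \tfrac12 = p$, and is fixed to each of $0,1$ with probability $\tfrac{1-2p}{2} + 2p \cdot \tfrac14 = \tfrac{1-p}{2}$, matching $\mb R_p$ exactly.

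Granting this, I would first condition on $\sigma$ and apply Lemma~\ref{la:Linfty} to the function $g = f{\uhr}\sigma$ (on its own, smaller, set of free variables), giving $\Ex_{\tau \sim \mb R_{1/2}}[\ \mc L_2(g{\uhr}\tau) + 1\ ] \le \mc F_2(g) + 1$. Taking expectation over $\sigma \sim \mb R_{2p}$ and invoking the composition identity yields
\[
  \Ex[\ \mc L_2(f{\uhr}\mb R_p) + 1\ ]
  =
  \Ex_{\sigma \sim \mb R_{2p}}\Ex_{\tau \sim \mb R_{1/2}}[\ \mc L_2((f{\uhr}\sigma){\uhr}\tau) + 1\ ]
  \le
  \Ex_{\sigma \sim \mb R_{2p}}[\ \mc F_2(f{\uhr}\sigma) + 1\ ].
\]
Next I would apply Corollary~\ref{cor:F2} with parameter $2p$ in place of $p$ to bound the right-hand side by $(\mc F_2(f) + 1)^{\gamma(2p)}$. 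Finally, since $\mc F_2 \le \mc L_2$ and $\gamma(2p) \ge 0$, monotonicity of $x \mapsto x^{\gamma(2p)}$ on $[1,\infty)$ gives $(\mc F_2(f)+1)^{\gamma(2p)} \le (\mc L_2(f)+1)^{\gamma(2p)}$, which is the claimed bound.

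The routine parts are the verification of the composition identity and the bookkeeping of the ``$+1$'' terms through the nested expectations. The only genuine constraint is that $\gamma(2p)$ is defined precisely for $p \le \tfrac12$, which is exactly the regime in which the decomposition ``$\mb R_{2p}$ followed by $\mb R_{1/2}$'' is valid, so no separate treatment of $p > \tfrac12$ is needed. I expect the single point demanding care to be ensuring that the inner application of Lemma~\ref{la:Linfty} is legitimate for every fixed $\sigma$ — namely that $\mb R_{1/2}$ is applied to $g = f{\uhr}\sigma$ as a function on its restricted variable set — but this is exactly the form in which that lemma is stated, so the argument goes through cleanly.
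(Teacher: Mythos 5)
Your proposal is correct and takes essentially the same route as the paper's own proof: factor $\mb R_p$ as $\mb R_{2p}$ followed by $\mb R_{1/2}$ on the surviving variables, apply Lemma~\ref{la:Linfty} to the inner expectation, invoke Corollary~\ref{cor:F2} at parameter $2p$, and finish with $\mc F_2 \le \mc L_2$. Your explicit verification of the composition identity and the observation that the statement only concerns $p \le \tfrac12$ (since $\gamma(2p)$ is otherwise undefined) are details the paper leaves implicit, but they change nothing substantive.
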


\begin{proof}
Viewing $\mb R_p$ as a composition of $\mb R_{1/2}$ (first) and $\mb R_{2p}$ (second), we have
\[
  \Ex[\ \mc L_2(f{\uhr}\mb R_p)+1\ ] 
  &=
  \Ex_{\rho \sim \mb R_{2p}}\Big[\ 
    \Ex_{\sigma \sim \mb R_{1/2}}[\ 
      \mc L_2((f{\uhr}\rho){\uhr}\sigma)+1
    \ ]
  \ \Big]\\
  &\le
  \Ex_{\rho \sim \mb R_{2p}}[\ 
      \mc F_2(f{\uhr}\rho)+1
  \ ]
  &&\text{(Lemma \ref{la:Linfty})}\\
  &=
  (\mc F_2(f)+1)^{\gamma(2p)}
  &&\text{(Corollary \ref{cor:F2})}\\
  &\le
  (\mc L_2(f)+1)^{\gamma(2p)}
  &&\text{($\mc F_2 \le \mc L_2$)}.\vphantom{\big|}
  \qedhere
\]
\end{proof}

As an additional consequence of Lemma \ref{la:Linfty}, we observe that $\mc F$ has the same expected shrinkage factor (up to a constant factor) as DeMorgan leaf-size $\mc L$.

\begin{cor}[Shrinkage of unbounded fan-in, unbounded depth formulas]\label{cor:Linfty}
For all Boolean functions $f$,
\[
  \Ex[\ \mc F(f{\uhr}\mb R_p)\ ] 
  &= 
  O(\,p^2 \mc F(f) + p \sqrt{\mc F(f)}\,).
\]
\end{cor}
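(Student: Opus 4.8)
The plan is to deduce this from the DeMorgan shrinkage bound of Theorem \ref{thm:DeMorgan} together with Lemma \ref{la:Linfty}, by absorbing a $1/2$-random restriction before invoking the DeMorgan bound. A naive attempt—combining $\mc F \le \mc L$ with Theorem \ref{thm:DeMorgan}—fails, because it would produce a bound in terms of $\mc L(f)$, which may exceed $\mc F(f)$ by a factor of $n$. The remedy, exactly as in the proof of Corollary \ref{cor:depth2}, is to precompose with $\mb R_{1/2}$: by Lemma \ref{la:Linfty} this drives $\mc L$ down to $\mc F$ in expectation, so that applying the DeMorgan bound afterwards yields a constant depending only on $\mc F(f)$. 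I would split into the cases $p \le 1/2$ and $p > 1/2$.

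For $p \le 1/2$, I would view $\mb R_p$ as the composition that applies $\mb R_{1/2}$ to $f$ first and then $\mb R_{2p}$ to the result, so that $f{\uhr}\mb R_p$ has the same distribution as $(f{\uhr}\sigma){\uhr}\rho$ with $\sigma \sim \mb R_{1/2}$ and $\rho \sim \mb R_{2p}$ (the survival probability is $\tfrac12 \cdot 2p = p$ and fixed bits remain uniform, using $2p \le 1$). Using $\mc F \le \mc L$ and then Theorem \ref{thm:DeMorgan} applied to the Boolean function $g = f{\uhr}\sigma$ with restriction parameter $2p$, I would bound the inner expectation over $\rho$ by $O(p^2\mc L(g) + p\sqrt{\mc L(g)})$, the factors of $2$ being absorbed into the $O$. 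Taking the outer expectation over $\sigma$ and using linearity, it then remains to control $\Ex_\sigma[\mc L(f{\uhr}\sigma)]$ and $\Ex_\sigma[\sqrt{\mc L(f{\uhr}\sigma)}]$. The first is at most $\mc F(f)$ directly by Lemma \ref{la:Linfty}, and the second is at most $\sqrt{\Ex_\sigma[\mc L(f{\uhr}\sigma)]} \le \sqrt{\mc F(f)}$ by Jensen's inequality (concavity of the square root). This gives the claimed $O(p^2\mc F(f) + p\sqrt{\mc F(f)})$ whenever $p \le 1/2$.

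For $p > 1/2$ I would argue directly: $\mc F$ is monotone under restrictions, since restricting an optimal formula for $f$ never creates new depth-$1$ gates, so $\mc F(f{\uhr}\rho) \le \mc F(f)$ for every $\rho$ and hence $\Ex[\mc F(f{\uhr}\mb R_p)] \le \mc F(f)$; as $p^2 \ge 1/4$ in this range, this is $O(p^2\mc F(f))$ and the bound holds. I expect the only real obstacle to be bookkeeping in the $p \le 1/2$ case: one must decompose $\mb R_p$ in the order that places $\mb R_{1/2}$ on $f$ itself, so that Lemma \ref{la:Linfty} applies to $f$ rather than to an already-restricted function, and one must route the $p\sqrt{\mc L}$ term of the DeMorgan bound through the extra Jensen step rather than expecting both terms to transform in the same way.
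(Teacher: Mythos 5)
Your proposal is correct and follows essentially the same route as the paper's proof: decompose $\mb R_p$ with $\mb R_{1/2}$ applied to $f$ (innermost) and $\mb R_{2p}$ applied afterward, pass from $\mc F$ to $\mc L$, invoke Theorem \ref{thm:DeMorgan} with parameter $2p$ on $f{\uhr}\sigma$, handle the $p\sqrt{\mc L}$ term via Jensen's inequality, and finish with Lemma \ref{la:Linfty}. Your only addition is spelling out the monotonicity argument $\mc F(f{\uhr}\rho) \le \mc F(f)$ for the case $p > 1/2$, which the paper simply dismisses as trivial.
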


\begin{proof}
Assume $p \le 1/2$, since the bound is trivial otherwise. 
Viewing $\mb R_p$ as a composition of $\mb R_{2p}$ (first) and $\mb R_{1/2}$ (second), we have
\[
  \Ex[\ \mc F(f{\uhr}\mb R_p)\ ] 
  &=
  \Ex_{\sigma \sim \mb R_{1/2}}\Big[\ 
    \Ex_{\rho \sim \mb R_{2p}}[\ 
      \mc F((f{\uhr}\sigma){\uhr}\rho)
    \ ]
  \ \Big]\\
  &\le
  \Ex_{\sigma \sim \mb R_{1/2}}\Big[\ 
    \Ex_{\rho \sim \mb R_{2p}}[\ 
      \mc L((f{\uhr}\sigma){\uhr}\rho)
    \ ]
  \ \Big]
  &&\text{($\mc F \le \mc L$)}\\
  &=
  \Ex_{\sigma \sim \mb R_{1/2}}\left[\ 
    O\big(4p^2\mc L(f{\uhr}\sigma)
    + 2p\sqrt{\mc L(f{\uhr}\sigma)}\big)
  \ \right]
  &&\text{(Theorem \ref{thm:DeMorgan})}\\
  &=
  O\Big(p^2\Ex_{\sigma \sim \mb R_{1/2}}[\ \mc L(f{\uhr}\sigma)\ ]
  + p\sqrt{
  \smash{\Ex_{\sigma \sim \mb R_{1/2}}[\ \mc L(f{\uhr}\sigma)\ ]}
  \vphantom{|}}
  \,\Big)
  &&\text{(Jensen's inequality)}\\
  &=
  O(\,p^2\mc F(f)
  + p\sqrt{\mc F(f)}
  \,)
  &&\text{(Lemma \ref{la:Linfty})}.
  \qedhere
\]
\end{proof}

\section{Open problems}

We conclude by mentioning some questions raised by this work.

\begin{op}\label{op:bestgamma}
Determine the optimal function $\gamma_{\DL}(p)$ in Theorem \ref{thm:DLshrinkage}.  We have shown that 
\[
\ts\log_2(1+p) = \gamma_{\DT}(p) \le \gamma_{\DL}(p) \le \log_{\frac{2}{1-p}}(\frac{1+p}{1-p}).
\]
A simpler problem is to determine the least constant $C_{\DL}$ such that $\Ex[\ \DL(f{\uhr}\mb R_p)\ ] \le O(\DL(f)^{C_{\DL}\cdot p})$. It~follows from our bounds that $\frac{1}{\ln 2} = C_{\DT} \le C_{\DL} \le \frac{2}{\ln 2}$.
The same questions may be asked with respect to complexity measures $\ODL$, $\wODL$ and $\DNF$. 
\end{op}

\begin{op}\label{op:AC0}
Determine the shrinkage rate of depth-$d$ $\ACzero$ formulas for $d \ge 3$.  
We expect that
\begin{equation}\label{eq:conj}
  \Ex[\ \mc L_d(f{\uhr}\mb R_p)\ ]
  \le
  \mc L_d(f)^{O(p^{1/(d-1)})}.
\end{equation}
Ideally the constant in this big-$O$ should not depend on $d$.
\end{op}

We remark that inequality (\ref{eq:conj}) is known to hold for small $p = O(1/\log\mc L_d(f))^{d-1}$, when the bound is $O(1)$.  
This can be shown using the (Multi-)Switching Lemma of H{\aa}stad \cite{hastad2014correlation}.  It is also a direct consequence of the following result of the author \cite{rossman:LIPIcs:2019:10823}, which generalizes Corollary \ref{cor:SLsize} (on the decision tree size of decision lists) to $\ACzero$ formulas of any depth.

\begin{thm}[Decision tree size of $\ACzero$ formulas \cite{rossman:LIPIcs:2019:10823}]\label{thm:DTsizeAC0}
For all functions $f : \{0,1\}^n \to \{0,1\}$ computable by depth-$d$ $\ACzero$ formulas with fan-in $m$ (and leaf-size at most $nm^{d-1}$),
\[
  \Ex[\ \DT(f{\uhr}\mb R_p)\ ] \le 2
  \quad\text{ and }\quad
  \DT(f) \le O(2^{(1-p)n})
  \quad\text{ where }\quad p = O(1/\log m)^{d-1}.
\]
\end{thm}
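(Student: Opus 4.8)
The plan is to isolate a single tail bound on the decision-tree \emph{depth} of $f{\uhr}\mb R_p$, namely
\[
  \Pr_{\rho \sim \mb R_p}[\ \DTd(f{\uhr}\rho) \ge t\ ] \le \alpha^t \qquad \text{for all } t \in \N,
\]
where the constant $\alpha \le \tfrac14$ is guaranteed by taking the implicit constant in $p = O(1/\log m)^{d-1}$ small enough, and then to derive both displayed inequalities from it by short, purely formal arguments that make no further use of the $\ACzero$ structure of $f$. This tail bound is precisely the depth-$d$ generalization of the size-based Switching Lemma (Theorem \ref{thm:SLsize}), whose $d=2$ consequence is Corollary \ref{cor:SLsize}; I would quote it from \cite{rossman:LIPIcs:2019:10823} and treat its proof as the heart of the matter.

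Establishing the tail bound for \emph{all} $t$ is the main obstacle (exactly as in the remark following Theorem \ref{thm:SLsize}, the regime of large $t$ is the delicate one). The approach is to write $\mb R_p$ as a composition of $d-1$ independent restrictions, each of parameter $p_0 = \Theta(1/\log m)$ so that their product is $p$, and to apply a switching lemma once per layer, collapsing each bottom DNF/CNF subformula into a shallow decision tree and thereby lowering the formula depth by one. The essential difficulty is that the bottom fan-in can be as large as $m$, so the \emph{width}-based Switching Lemma (Theorem \ref{thm:SL}) is useless at $p_0 \approx 1/\log m$, where $p_0$ times the bottom width is about $m/\log m \gg 1$; one must instead invoke a form of the Multi-Switching Lemma of H{\aa}stad \cite{hastad2014correlation} phrased in terms of $\log(\mathrm{size})$ rather than bottom fan-in, and apply it to \emph{all} of the bottom subformulas simultaneously through a common partial decision tree. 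Handling the bottom gates together (rather than by a lossy union bound over gates) and carrying the $\log$-size parameter correctly through the $d-1$ stages is where the real work lies.

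Granting the tail bound, the first inequality is immediate from $\DT(g) \le 2^{\DTd(g)}$:
\[
  \Ex[\ \DT(f{\uhr}\mb R_p)\ ] \le \Ex[\ 2^{\DTd(f{\uhr}\mb R_p)}\ ] = \sum_{t \ge 0} 2^t\,\Pr[\ \DTd(f{\uhr}\mb R_p) = t\ ] \le \sum_{t \ge 0}(2\alpha)^t = \frac{1}{1-2\alpha} \le 2.
\]

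For the second inequality I would first record an elementary bound valid for \emph{every} function $f$ on $\{0,1\}^n$ and every $p$:
\[
  \DT(f) \le \Big(\tfrac{2}{1+p}\Big)^n\ \Ex_{\rho \sim \mb R_p}[\ \DT(f{\uhr}\rho)\ ].
\]
To see this, fix a set $S \subseteq [n]$ of ``free'' coordinates: branching completely on the coordinates outside $S$ and attaching an optimal subtree on $S$ at each leaf produces a decision tree for $f$ of size $\sum_{\sigma} \DT(f{\uhr}\rho_{S,\sigma})$, where $\sigma$ ranges over assignments to $[n]\setminus S$ and $\rho_{S,\sigma}$ is the restriction with $\mr{Stars}(\rho_{S,\sigma}) = S$ and values $\sigma$; hence $\DT(f) \le \sum_\sigma \DT(f{\uhr}\rho_{S,\sigma})$. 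Sampling $S$ as the star-set of $\mb R_p$ and $\sigma$ uniformly on $[n]\setminus S$, and using $\Ex[\ 2^{-|[n]\setminus S|}\ ] = (\tfrac{1+p}{2})^n$, gives $\Ex_{\rho \sim \mb R_p}[\DT(f{\uhr}\rho)] \ge (\tfrac{1+p}{2})^n\,\DT(f)$, which rearranges to the displayed bound (this is tight, with equality for parity). Combining it with the first inequality and the elementary fact that $2^p \le 1+p$ for $p \in [0,1]$ (so $(\tfrac{2}{1+p})^n \le 2^{(1-p)n}$) yields $\DT(f) \le 2\,(\tfrac{2}{1+p})^n \le 2\cdot 2^{(1-p)n} = O(2^{(1-p)n})$, as required.
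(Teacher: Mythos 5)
Your proposal is sound, but there is nothing in this paper to compare it against: Theorem \ref{thm:DTsizeAC0} is stated here without proof, imported from \cite{rossman:LIPIcs:2019:10823} solely to contextualize Open Problem \ref{op:AC0}. Your decomposition is the intended one, and it mirrors how the depth-two case is handled in the paper itself, where Corollary \ref{cor:SLsize} is presented as exactly the consequence of having the all-$t$ tail bound of Theorem \ref{thm:SLsize}. Both of your formal derivations check out: the expectation bound via $\DT \le 2^{\DTd}$ and a geometric series (convergent since $2\alpha \le \tfrac12$), and the averaging inequality $\DT(f) \le (\tfrac{2}{1+p})^n\,\Ex[\DT(f{\uhr}\mb R_p)]$ --- obtained by branching fully on $[n]\setminus S$ and attaching optimal subtrees, together with $\Ex[2^{-|[n]\setminus S|}] = (\tfrac{1+p}{2})^n$ --- combined with $2^p \le 1+p$ on $[0,1]$ to pass to $O(2^{(1-p)n})$. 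You also correctly locate the crux (a tail bound valid for \emph{all} $t \in \N$ at $p = O(1/\log m)^{d-1}$) and correctly explain why the width-based Switching Lemma (Theorem \ref{thm:SL}) cannot supply it when the bottom fan-in is as large as $m$, so that the size-based/multi-switching machinery of \cite{rossman:LIPIcs:2019:10823,hastad2014correlation} applied layer by layer is required. In short, your write-up is a faithful reconstruction of the argument behind the citation rather than a genuinely different route; the one ingredient you leave as a black box is the same ingredient this paper leaves as a black box, namely the main theorem of the cited work.
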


A related question:

\begin{op}
Prove a stronger version of Theorem \ref{thm:DTsizeAC0} for depth-$d$ $\ACzero$ formulas with $m = \mc F_d(f)^{1/(d-1)}$ (instead of fan-in, which is larger for unbalanced formulas).  Such a result could be helpful in proving the shrinkage bound (\ref{eq:conj}).
\end{op}

Finally, we repeat the longstanding question concerning shrinkage of monotone formulas:

\begin{op}\label{op:m}
Determine the shrinkage exponent of monotone formulas. That is, find the maximum constant $\Gamma_{\mr{m}}$ such that 
\[
\Ex[\ \mc L_{\mr{m}}(f{\uhr}\mb R_p)\ ] \le O(p^{\Gamma_{\mr{m}}-o(1)} \mc L_{\mr{m}}(f) + 1)
\] 
for all monotone Boolean functions $f$, where $\mc L_{\mr{m}}$ is monotone formula leaf-size.  It is known that $2 = \Gamma_{\mr{DeMorgan}} \le \Gamma_{\mr{m}} \le \Gamma_{\mr{read\tu{-}once}} = \log_{\sqrt 5-1}(2) \approx 3.27$, and the second inequality is believed to be tight \cite{dubiner1993read,haastad1995shrinkage}.
\end{op}

\subsection*{Acknowledgements}

I am grateful to the anonymous referees of ITCS 2021 for their valuable comments and to the authors of \cite{lovett2020decision}, Shachar Lovett, Kewen Wu and Jiapeng Zhang, for stimulating conversations related to this work.  

\bibliographystyle{plain}
\bibliography{bens-allrefs.bib}

\end{document}